\newcommand{\blind}{1}
\newcommand{\piprod}{\text{\huge$\pi$}}
\newcommand{\Rom}[1]{\MakeUppercase{\romannumeral #1}}
\newcommand{\pr}{\text{Pr}}
\newcommand{\trans}{\text{T}}
\newcommand{\mbf}[1]{\boldsymbol{#1}}
\newtheorem{theorem}{Theorem}
\newtheorem{remark}{Remark}
\newtheorem{condition}{Condition}
\begin{document}

\bibliographystyle{agsm}

\def\spacingset#1{\renewcommand{\baselinestretch}%
{#1}\small\normalsize} \spacingset{1}


\if1\blind
{
  \title{\bf Maximum Likelihood Estimation for Semiparametric Regression Models with Interval-Censored Multi-State Data}
  \author[1]{Yu Gu}
  \author[1]{Donglin Zeng}
  \author[2]{Gerardo Heiss}
  \author[1]{D. Y. Lin\thanks{CONTACT: D. Y. Lin (lin@bios.unc.edu), Department of Biostatistics, University of North Carolina, Chapel Hill, NC 27599-7420, USA.}}
  \affil[1]{Department of Biostatistics, University of North Carolina at Chapel Hill, Chapel Hill, NC 27599, USA}
  \affil[2]{Department of Epidemiology, University of North Carolina at Chapel Hill, Chapel Hill, NC 27599, USA}
  \date{\vspace{-5ex}}
  \maketitle
} \fi

\if0\blind
{
  \bigskip
  \bigskip
  \bigskip
  \begin{center}
    {\Large\bf Maximum Likelihood Estimation for Semiparametric Regression Models with Interval-Censored Multi-State Data}
\end{center}
  \medskip
} \fi

\bigskip
\begin{abstract}
Interval-censored multi-state data arise in many studies of chronic diseases, 
where the health status of a subject can be characterized by a finite number of disease states 
and the transition between any two states is only known to occur over a broad time interval. 
We formulate the effects of potentially time-dependent covariates on multi-state processes 
through semiparametric proportional intensity models with random effects. 
We adopt nonparametric maximum likelihood estimation (NPMLE) under general interval censoring 
and develop a stable expectation-maximization (EM) algorithm. 
We show that the resulting parameter estimators are consistent and that
the finite-dimensional components are asymptotically normal  
with a covariance matrix that attains the semiparametric efficiency bound and
can be consistently estimated through profile likelihood. 
In addition, we demonstrate through extensive simulation studies that the proposed numerical and inferential procedures 
perform well in realistic settings. 
Finally, we provide an application to a major epidemiologic cohort study. 
\end{abstract}

\noindent%
{\it Keywords:}  EM algorithm; Nonparametric likelihood; Proportional intensity; Random effects; Semiparametric efficiency;
Time-dependent covariates
\vfill

\newpage
\spacingset{1.9} 
\section{Introduction}

In many studies of chronic diseases, the health status of a subject can be characterized using a finite number of disease states, 
and the disease history of that subject can be viewed as a multi-state stochastic process. 
For example, an old person may first develop mild cognitive impairment (MCI) and then progress to dementia \citep{flicker1991mild}; 
a patient with chronic obstructive pulmonary disease may progress through four stages of the disease \citep{pauwels2001global}.
It is important to understand how a subject transitions from one state to another over time and 
to incorporate the disease history into medical decision-making. 
It is also of interest to study the associations between risk factors and disease processes. For economic and logistical reasons, subjects can only be examined periodically, such that the state transitions are only known to occur between two successive examinations. Such data are called interval-censored multi-state data.
The fact that none of the transition times are directly observed makes semiparametric regression analysis of such data extremely challenging, both theoretically and computationally.  

Most of the literature on interval-censored multi-state data adopts parametric models for transitions and imposes the time-homogeneous Markov assumption
\citep{kalbfleisch1985analysis,satten1999estimating,cook1999mixed,cook2002generalized,cook2004conditional}.   
Parametric models are restrictive, and the homogeneity assumption is violated in many applications.
Several authors used piecewise constant approximations of transition intensities to allow for time nonhomogeneity \citep{gentleman1994multi,saint2003analysis,ocan2005non,jackson2011multi}. 
Others specified spline functions for transition intensities and then applied
piecewise constant approximations for the likelihood construction  \citep{machado2018flexible,machado2021penalised}.
However, the choices for the number of spline pieces and the change points are arbitrary, 
and the results may be sensitive to these choices.  
When the Markov assumption fails, random effects can be used to accommodate the dependence of transitions. 
\citet{satten1999estimating} and \citet{cook2004conditional} considered random effects in modeling the transitions. 
Their methods make strong assumptions about the distribution of the random effects 
and are only applicable to progressive processes.

In this article, we provide a new framework based on semiparametric proportional intensity models with random effects to study general interval-censored multi-state data. 
Our formulation allows the baseline intensity functions for the transitions between any two states to be completely arbitrary
and accommodates time-dependent covariates. 
In addition, we introduce random effects and their possible interactions with covariates 
to further capture the dependence among the transitions of the same subject.
We adopt the NPMLE approach and develop a stable EM algorithm that involves maximization over only a small number of parameters 
and performs well even with complex transition patterns. 
We establish the asymptotic properties of the parameter estimators
through novel use of modern empirical process theory. 
We compare the performance of the proposed and existing methods
through extensive simulation studies. 
Finally, we apply the proposed methods to data on MCI and dementia from the Atherosclerosis Risk in Communities (ARIC) study \citep{knopman2016mild,wright2021aric}. 


\section{Theory and Methods}

\subsection{Models, Data, and Likelihood}
We consider a multi-state process with $K$ states in a study of $n$ subjects. 
Let $\mathcal{D}$ denote the set of all state pairs $(j,k)$ such that $j\ne k$ and transition from $j$ to $k$ is feasible.
We assume that it is impossible for a subject to return to a prior state through other states; otherwise, there would be infinite many loops between two states within any time interval, which would cause non-identifiability issues. 
For $i=1,\dots,n$, let $\mbf{X}_i(\cdot)$ denote a $d_1$-vector of potentially time-dependent covariates for the $i$th subject,
and $\mbf{b}_i$ denote the corresponding $d_2$-vector of random effects that is normal with mean zero and covariance matrix $\mbf{\Sigma}(\mbf{\gamma})$ indexed by $d_3$-dimensional parameters $\mbf{\gamma}$. 
For $(j,k)\in\mathcal{D}$, let $N_{ijk}(t)$ denote the number of times that the $i$th subject transitions from state $j$ to state $k$ by time $t$.
Under proportional intensity models, the transition intensities of $N_{ijk}(t)$ conditional on $\mbf{X}_i$ and $\mbf{b}_i$ take the form 
\[
\lambda_{ijk}(t; \mbf{X}_i, \mbf{b}_i) = \lambda_{jk}(t)\exp\bigl\{\mbf{\beta}_{jk}^{\trans}\mbf{X}_i(t)+\mbf{b}_i^{\trans}\mbf{Z}_i(t)\bigr\},
\]
where $\mbf{Z}_i(\cdot)$ consists of 1 and covariates that may be part of $\mbf{X}_i(\cdot)$, 
$\mbf{\beta}_{jk}$ is a vector of unknown regression parameters, 
and $\lambda_{jk}(\cdot)$ is an arbitrary baseline intensity function.

We define the $K\times K$ cumulative transition intensity matrix $\mbf{A}_i(t;\mbf{X}_i,\mbf{Z}_i, \mbf{b}_i)$, whose off-diagonal elements are  
\[
\mbf{A}_i(t; \mbf{X}_i,\mbf{Z}_i, \mbf{b}_i)^{(j,k)} = 
\begin{cases}
\int_0^t \exp\bigl\{\mbf{\beta}_{jk}^{\trans}\mbf{X}_i(s)+\mbf{b}_i^{\trans}\mbf{Z}_i(s)\bigr\}d\Lambda_{jk}(s) & \text{ if }  (j,k)\in\mathcal{D}, \\
0 & \text{ otherwise}, \\
\end{cases}
\]
and whose diagonal elements are 
\[
\mbf{A}_i(t; \mbf{X}_i, \mbf{Z}_i,\mbf{b}_i)^{(j,j)} = -\sum_{k\ne j} \mbf{A}_i(t; \mbf{X}_i, \mbf{Z}_i, \mbf{b}_i)^{(j,k)},
\]
where $\Lambda_{jk}(t) = \int_0^t\lambda_{jk}(s)ds$, and we use superscript $(j,k)$ to denote the $(j,k)$th element of a matrix. 
For any $0\le t_1\le t_2$, let $\mbf{P}_i(t_1,t_2;\mbf{X}_i,\mbf{Z}_i,\mbf{b}_i)$ denote the $K\times K$ transition probability matrix over the time interval $(t_1,t_2]$. 
According to Theorem \Rom{2}.6.7 of \citet{andersen1993statistical}, the relationship between the two matrices $\mbf{P}_i$ and $\mbf{A}_i$ can be characterized via product integration:
\[
\mbf{P}_i(t_1,t_2; \mbf{X}_i,\mbf{Z}_i, \mbf{b}_i) = \piprod_{t_1<t\le t_2}\bigl\{\mbf{I}_K+d\mbf{A}_i(t; \mbf{X}_i,\mbf{Z}_i, \mbf{b}_i)\bigr\},
\]
where $\mbf{I}_K$ is the $K\times K$ identity matrix, and $d\mbf{A}(\cdot)$ is the element-wise differential for a matrix-valued function $\mbf{A}(\cdot)$. Here,   
\[
\piprod_{s\in(t_1,t_2]} \left\{\mbf{I}+d\mbf{A}(s)\right\} = \lim_{\max|s_l-s_{l-1}|\rightarrow0}\prod_{l=1}^L \left\{\mbf{I}+\mbf{A}(s_l)-\mbf{A}(s_{l-1})\right\},
\]
where $t_1=s_0<s_1<\cdots<s_L=t_2$ is a partition of $(t_1,t_2]$, and the matrix product is taken in its natural order from left to right \citep{gill1990survey}.

We consider a very general interval-censoring scheme, where every subject can be examined an arbitrary number of times. For $i=1,\dots,n$, let $n_i$ denote the number of examinations after the baseline examination for the $i$th subject, and let $0=\tau_{i0}<\tau_{i1}<\cdots<\tau_{in_i}$ denote the corresponding examination times. The state occupied at each examination is denoted by $S_{il}$, $l=0,\dots,n_i$. Then the observed data consist of 
$\left\{(\tau_{i0},\tau_{i1},\dots,\tau_{in_i}),\,(S_{i0},S_{i1},\dots,S_{in_i}),\, \mbf{X}_i,\, \mbf{Z}_i \right\}$, $i=1,\dots,n$. 
Write $\mbf{\theta} = {(\mbf{\beta}^{\trans}, \mbf{\gamma}^{\trans})}^{\trans} 
= {(\{\mbf{\beta}_{jk}^{\trans}\}_{(j,k)\in\mathcal{D}}, \mbf{\gamma}^{\trans})}^{\trans}$ and 
$\mbf{\Omega} = \{\Lambda_{jk}\}_{(j,k)\in\mathcal{D}}$. 
Under the conditional Markov assumption, the observed-data likelihood conditional on the initial states is given by
\[
L_n(\mbf{\theta}, \mbf{\Omega}) = \prod_{i=1}^n\int_{\mbf{b}_i}\prod_{l=1}^{n_i}\mbf{P}_i(\tau_{i,l-1},\tau_{il}; \mbf{X}_i,\mbf{Z}_i, \mbf{b}_i)^{(S_{i,l-1}, S_{il})}\phi(\mbf{b}_i;\mbf{\Sigma}(\mbf{\gamma}))d\mbf{b}_i,
\]
where $\phi(\mbf{b};\mbf{\Sigma}) = (2\pi)^{-d_2/2}|\mbf{\Sigma}|^{-1/2}\exp(-\mbf{b}^{\trans}\mbf{\Sigma}^{-1}\mbf{b}/2)$.

\subsection{Nonparametric Maximum Likelihood Estimation}
\label{subsec: npmle}

We adopt the NPMLE approach to estimate the parameters $\mbf{\theta}$ and $\mbf{\Omega}$. Specifically, for each $(j,k)\in\mathcal{D}$, we treat $\Lambda_{jk}(\cdot)$ as a step function with nonnegative jumps at $0<u_1<u_2<\cdots<u_m$, which are the unique values of $\tau_{il}$ $(i=1,\dots,n;\, l=1,\dots,n_i)$. For $(j,k)\in\mathcal{D}$ and $s=1,\dots,m$, let $\lambda_{jks}$ denote the jump size of $\Lambda_{jk}$ at $u_s$. Then the transition probability matrix $\mbf{P}_i(t_1,t_2; \mbf{X}_i,\mbf{Z}_i, \mbf{b}_i)$ is equal to
\[
\widetilde{\mbf{P}}_i(t_1,t_2; \mbf{X}_i,\mbf{Z}_i, \mbf{b}_i) =\prod_{t_1<u_s\le t_2}\bigl\{\mbf{I}_K+\Delta \mbf{A}_i(u_s; \mbf{X}_i,\mbf{Z}_i, \mbf{b}_i)\bigr\},
\]
where the elements of the matrix $\Delta \mbf{A}_i(u_s; \mbf{X}_i, \mbf{Z}_i, \mbf{b}_i)$ are given by 
\[ 
\Delta \mbf{A}_i(u_s; \mbf{X}_i, \mbf{Z}_i, \mbf{b}_i)^{(j,k)} = 
\begin{cases}
\lambda_{jks}\exp(\mbf{\beta}_{jk}^{\trans}\mbf{X}_{is}+\mbf{b}_i^{\trans}\mbf{Z}_{is}) & \text{ if } (j,k)\in\mathcal{D}, \\
-\sum_{k':(j,k')\in\mathcal{D}}\lambda_{jk's}\exp(\mbf{\beta}_{jk'}^{\trans}\mbf{X}_{is}+\mbf{b}_i^{\trans}\mbf{Z}_{is}) & \text{ if } j=k, \\
0 & \text{ otherwise, }
\end{cases}
\]
with $\mbf{X}_{is} = \mbf{X}_i(u_s)$ and $\mbf{Z}_{is} = \mbf{Z}_i(u_s)$. We maximize 
\begin{equation} \label{nonparam_likelihood}
\prod_{i=1}^n\int_{\mbf{b}_i}\prod_{l=1}^{n_i}\widetilde{\mbf{P}}_i(\tau_{i,l-1},\tau_{il}; \mbf{X}_i, \mbf{Z}_i,\mbf{b}_i)^{(S_{i,l-1}, S_{il})}\phi(\mbf{b}_i;\mbf{\Sigma}(\mbf{\gamma}))d\mbf{b}_i.
\end{equation}
Direct maximization of \eqref{nonparam_likelihood} is very difficult, since it involves matrix multiplication 
and there are no analytical expressions for $\lambda_{jks}$'s. 
Thus, we introduce latent Poisson random variables whose observed-data likelihood is equal to \eqref{nonparam_likelihood} 
but can be maximized through an EM algorithm.

For $i=1,\dots,n$, $(j,k)\in\mathcal{D}$, and $s=1,\dots,m$, we introduce independent latent Poisson random variables $W_{ijks}$ with means $\lambda_{ijks}= \lambda_{jks}\exp(\mbf{\beta}_{jk}^{\trans}\mbf{X}_{is}+\mbf{b}_i^{\trans}\mbf{Z}_{is})$. For the $i$th subject, let $(\tau_1,\tau_2]$ be any of the time intervals $(\tau_{i,l-1}, \tau_{il}]$, $l=1,\dots,n_i$. The unique time points within $[\tau_1,\tau_2]$ are labeled as $u_{s_0} = \tau_1<u_{s_1}<u_{s_2}\cdots<u_{s_q}< \tau_2 = u_{s_{q+1}}$. A transition from state $S_1$ at $\tau_1$ to state $S_2$ at $\tau_2$ consists of all possible transition paths of the form $(k_0=S_1, k_1,k_2,\dots,k_q,S_2=k_{q+1})$, where $k_1,\dots,k_q$ are the unknown states occupied at $u_{s_1},\dots,u_{s_q}$. Given a feasible path $(S_1, k_1,\dots,k_q,S_2)$, we define the event $V_i(k_1,\dots,k_q;\tau_1,\tau_2,S_1,S_2)$ as follows: for $l=1,\dots,q+1$, if $k_{l-1}\ne k_l$, then $W_{ik_{l-1}k_ls_l}>0$ and $W_{ik_{l-1}k's_l}=0$ for all $k'\ne k_{l-1}, k_l$; otherwise $W_{ik_{l-1}k's_l} = 0$ for all $k'\ne k_{l-1}$. 
We claim that the transition probability from state $S_1$ at $\tau_1$ to state $S_2$ at $\tau_2$ is equal to the probability of observing the following event: 
\[
Y_i(\tau_1,\tau_2,S_1,S_2) = \bigcup_{(k_1,\dots,k_q)\in\mathcal{A}_q}V_i(k_1,\dots,k_q;\tau_1,\tau_2,S_1,S_2),
\]
where $\mathcal{A}_q$ is the set of all possible combinations of $k_1,\dots,k_q$ that connect $S_1$ to $S_2$. 

To see this, we consider a sequence of time points $t_0=\tau_1<t_1<\cdots<t_r<\tau_2=t_{r+1}$ such that there is at most one transition within each time interval $(t_{l-1}, t_l]$, $l=1,\dots,r+1$, and the transition time is not necessarily $t_{l}$. 
Let $j_0, \dots, j_{r+1}$ denote the state occupied at $t_0,\dots, t_{r+1}$. The transition probability can be written as 
\begin{equation} \label{continuous_equiv}
\begin{aligned}
& \mbf{P}_i(\tau_1,\tau_2; \mbf{X}_i, \mbf{Z}_i, \mbf{b}_i)^{(S_1,S_2)} \\
={} &
\begin{multlined}[t]
\sum_{(j_1,\dots, j_r)\in\mathcal{A}_r}\prod_{l=1}^{r+1}\Biggl[\exp\biggl\{-\sum_{k\ne j_{l-1}}\int_{t_{l-1}}^{t_l}d\mbf{A}_i(t; \mbf{X}_i, \mbf{Z}_i, \mbf{b}_i)^{(j_{l-1},k)}\biggr\}\Biggl]^{I(j_{l-1}=j_l)} \\
\times \Biggl[\biggl\{1-\exp\Bigl\{-\int_{t_{l-1}}^{t_l}d\mbf{A}_i(t; \mbf{X}_i, \mbf{Z}_i, \mbf{b}_i)^{(j_{l-1},j_l)}\Bigr\}\biggr\}\Biggr. \\
\Biggl.\times \exp\biggl\{-\sum_{k\ne j_{l-1}, j_l}\int_{t_{l-1}}^{t_l}d\mbf{A}_i(t; \mbf{X}_i, \mbf{Z}_i, \mbf{b}_i)^{(j_{l-1},k)}\biggr\}\Biggr]^{I(j_{l-1}\ne j_l)}.
\end{multlined}
\end{aligned}
\end{equation}
In the NPMLE approach, 
transitions within $(\tau_1, \tau_2]$ can only occur at $u_{s_1}, u_{s_2}, \dots, u_{s_{q+1}}$, which ensures at most one transition within each interval $(u_{s_{l-1}}, u_{s_l}]$, $l=1,\dots,q+1$. Thus, we can replace $\{(t_0,\dots,t_{r+1}), \, (j_0,\dots, j_{r+1})\}$ in \eqref{continuous_equiv} with $\{(u_{s_0},\dots,u_{s_{q+1}}), \, (k_0,\dots, k_{q+1})\}$ and plug in the discretized $\mbf{A}_i$ to obtain 
\begin{align*}
& \widetilde{\mbf{P}}_i(\tau_1,\tau_2; \mbf{X}_i, \mbf{Z}_i, \mbf{b}_i)^{(S_1,S_2)} \\
={} &
\begin{multlined}[t]
\sum_{(k_1,\dots, k_q)\in\mathcal{A}_q}\prod_{l=1}^{q+1}\Biggl[\exp\biggl\{-\sum_{k'\ne k_{l-1}}\Delta\mbf{A}_i(u_{s_l}; \mbf{X}_i, \mbf{Z}_i, \mbf{b}_i)^{(k_{l-1},k')}\biggr\}\Biggl]^{I(k_{l-1}=k_l)} \\
\times \Biggl[\biggl\{1-\exp\Bigl\{-\Delta\mbf{A}_i(u_{s_l}; \mbf{X}_i, \mbf{Z}_i, \mbf{b}_i)^{(k_{l-1},k_l)}\Bigr\}\biggr\}\Biggr. \\
\Biggl.\times \exp\biggl\{-\sum_{k'\ne k_{l-1}, k_l}\Delta\mbf{A}_i(u_{s_l}; \mbf{X}_i, \mbf{Z}_i, \mbf{b}_i)^{(k_{l-1},k')}\biggr\}\Biggr]^{I(k_{l-1}\ne k_l)}
\end{multlined} \\
={} & 
\begin{multlined}[t]
\sum_{(k_1,\dots, k_q)\in\mathcal{A}_q}\prod_{l=1}^{q+1}\Biggl[\exp\biggl\{-\sum_{k'\ne k_{l-1}}\lambda_{ik_{l-1}k's_l}\biggr\}\Biggl]^{I(k_{l-1}=k_l)} \\
\Biggl.\times \Biggl[\biggl\{1-\exp(-\lambda_{ik_{l-1}k_ls_l})\biggr\}\times \exp\biggl\{-\sum_{k'\ne k_{l-1}, k_l}\lambda_{ik_{l-1}k's_l}\biggr\}\Biggr]^{I(k_{l-1}\ne k_l)}.
\end{multlined} 
\end{align*}
We can verify that all the events $V_i(k_1,\dots,k_q;\tau_1,\tau_2,S_1,S_2)$ are mutually exclusive. Thus,
\begin{align*}
& \pr\{Y_i(\tau_1,\tau_2,S_1,S_2)\} \\
={} & \sum_{(k_1,\dots,k_q)\in\mathcal{A}_q}\pr\{V_i(k_1,\dots,k_q;\tau_1,\tau_2,S_1,S_2)\} \\
={} & 
\begin{multlined}[t]
\sum_{(k_1,\dots, k_q)\in\mathcal{A}_q}\prod_{l=1}^{q+1}\Biggl[\exp\biggl\{-\sum_{k'\ne k_{l-1}}\lambda_{ik_{l-1}k's_l}\biggr\}\Biggl]^{I(k_{l-1}=k_l)} \\
\Biggl.\times \Biggl[\biggl\{1-\exp(-\lambda_{ik_{l-1}k_ls_l})\biggr\}\times \exp\biggl\{-\sum_{k'\ne k_{l-1}, k_l}\lambda_{ik_{l-1}k's_l}\biggr\}\Biggr]^{I(k_{l-1}\ne k_l)},
\end{multlined} 
\end{align*}
which equals $\widetilde{\mbf{P}}_i(\tau_1,\tau_2; \mbf{X}_i, \mbf{Z}_i, \mbf{b}_i)^{(S_1,S_2)}$.
Hence, maximizing \eqref{nonparam_likelihood} is tantamount to maximizing the likelihood based on the observations $\mathcal{O}_i = \bigcap_{l=1}^{n_i}Y_i(\tau_{i,l-1},\tau_{il},S_{i,l-1},S_{il})$, $i=1,\dots,n$.

To maximize the latter likelihood, we develop an EM algorithm by 
treating $W_{ijks}$ ($i=1,\dots,n$; $(j,k)\in\mathcal{D}$; $s=1,\dots,m$) and $\mbf{b}_i$ ($i=1,\dots,n$) as missing data. 
The complete-data log-likelihood is 
\begin{equation} \label{complete_likelihood}
\begin{split}
\sum_{i=1}^n\biggl\{\sum_{(j,k)\in\mathcal{D}}\sum_{s=1}^m I(t_s\le \tau_{i,n_i})\Bigl\{W_{ijks}\log\left[\lambda_{jks}\exp(\mbf{\beta}_{jk}^{\trans}\mbf{X}_{is}+\mbf{b}_i^{\trans}\mbf{Z}_{is})\right]\Bigr.\biggr. \\
\biggl.\Bigl.-\lambda_{jks}\exp(\mbf{\beta}_{jk}^{\trans}\mbf{X}_{is}+\mbf{b}_i^{\trans}\mbf{Z}_{is})-\log(W_{ijks}!)\Bigr\} \\
\biggl.-\frac{d_2}{2}\log(2\pi)-\frac{1}{2}\log|\mbf{\Sigma}|-\frac{1}{2}\mbf{b}_i^{\trans}\mbf{\Sigma}^{-1}\mbf{b}_i\biggr\}.
\end{split}
\end{equation}
In the E-step, we calculate the conditional expectations of $W_{ijks}$, 
$\exp(\mbf{\beta}_{jk}^{\trans}\mbf{X}_{is}+\mbf{b}_i^{\trans}\mbf{Z}_{is})$ and $\mbf{b}_i^{\otimes2}$ 
given $\mathcal{O}_i$, where $\mbf{a}^{\otimes2} = \mbf{a}\mbf{a}^{\trans}$ for any vector or matrix $\mbf{a}$. The last two conditional expectations 
can be derived from the fact that the conditional distribution of $\mbf{b}_i$ given $\mathcal{O}_i$ is proportional to 
\[
\prod_{l=1}^{n_i}\widetilde{\mbf{P}}_i(\tau_{i,l-1},\tau_{il}; \mbf{X}_i, \mbf{Z}_i, \mbf{b}_i)^{(S_{i,l-1}, S_{il})}\phi(\mbf{b}_i;\mbf{\Sigma}(\mbf{\gamma})).
\]
In addition, the first conditional expectation follows from the conditional expectation of $W_{ijks}$ given $\mathcal{O}_i$ and $\mbf{b}_i$. Assume that $u_s$ falls within the time interval $(\tau_{i,l-1},\tau_{il}]$. 
As before, we denote all the unique time points within the closed interval $[\tau_{i,l-1},\tau_{il}]$ as $\tau_{i,l-1} = u_{s_0},u_{s_1},\dots,u_{s_q},u_{q+1} = \tau_{il}$, and label their corresponding states as $S_{i,l-1} = k_0, k_1,\dots,k_q, k_{q+1} = S_{il}$, where $k_1,\dots,k_q$ can only take values in the set $\mathcal{A}_q$. 
We omit the argument $y$ in the expression $f(x;y)$ when no ambiguity may arise. 
Then the conditional expectation $E(W_{ijks}\mid \mathcal{O}_i,\mbf{b}_i)$ is equal to 
\begin{align*}
& E\{W_{ijks}\mid Y_i(\tau_{i,l-1},\tau_{il},S_{i,l-1},S_{il}),\mbf{b}_i\} \\
={} & \sum_{w=1}^{\infty}\frac{w\times\pr\left\{(W_{ijks} = w) \cap Y_i(\tau_{i,l-1},\tau_{il},S_{i,l-1},S_{il})\right\}}{\pr\left\{Y_i(\tau_{i,l-1},\tau_{il},S_{i,l-1},S_{il})\right\}} \\
={} & \sum_{w=1}^{\infty}\sum_{(k_1,\dots,k_q)\in\mathcal{A}}w\times\pr\left\{(W_{ijks} = w) \cap V_i(k_1,\dots,k_q)\right\}/\widetilde{\mbf{P}}_i(\tau_{i,l-1},\tau_{il})^{(S_{i,l-1},S_{il})}.
\end{align*}
Suppose that $u_s = u_{s_{l}}$ for some $l$. 
Only those paths with $k_{l-1}\ne j$ or $(k_{l-1},k_l) = (j,k)$ will contribute to the above equation.
Thus, the above conditional expectation becomes
\begin{align*}
& 
\frac{\sum_{j'\ne j}\widetilde{\mbf{P}}_i(\tau_{i,l-1},u_{s-1})^{(S_{i,l-1},j')}\widetilde{\mbf{P}}_i(u_{s-1},\tau_{il})^{(j',S_{il})}}{\widetilde{\mbf{P}}_i(\tau_{i,l-1},\tau_{il})^{(S_{i,l-1},S_{il})}}\times \biggl\{\sum_{w=1}^{\infty}w\times\pr(W_{ijks} = w)\biggr\} \\
+{} & \frac{\widetilde{\mbf{P}}_i(\tau_{i,l-1},u_{s-1})^{(S_{i,l-1},j)}\widetilde{\mbf{P}}_i(u_{s},\tau_{il})^{(k,S_{il})}}{\widetilde{\mbf{P}}_i(\tau_{i,l-1},\tau_{il})^{(S_{i,l-1},S_{il})}}
\times \biggl\{\sum_{w=1}^{\infty}w\times\pr(W_{ijks} = w, W_{ijk's} = 0, k'\ne j,k)\biggr\} \\
={} & 
\begin{multlined}[t]
\frac{\sum_{j'\ne j}\widetilde{\mbf{P}}_i(\tau_{i,l-1},u_{s-1})^{(S_{i,l-1},j')}\widetilde{\mbf{P}}_i(u_{s-1},\tau_{il})^{(j',S_{il})}}{\widetilde{\mbf{P}}_i(\tau_{i,l-1},\tau_{il})^{(S_{i,l-1},S_{il})}}\lambda_{ijks} \\
+\frac{\widetilde{\mbf{P}}_i(\tau_{i,l-1},u_{s-1})^{(S_{i,l-1},j)}\widetilde{\mbf{P}}_i(u_{s},\tau_{il})^{(k,S_{il})}}{\widetilde{\mbf{P}}_i(\tau_{i,l-1},\tau_{il})^{(S_{i,l-1},S_{il})}}\lambda_{ijks}\exp\biggl(-\sum_{k'\ne j,k}\lambda_{ijk's}\biggr).
\end{multlined}
\end{align*}
Finally, we approximate the integrals over $\mbf{b}_i$ using Gaussian-Hermite quadratures.

In the M-step, we update $\lambda_{jks}$ by 
\[
\frac{\sum_{i=1}^n I(u_s\le \tau_{i,n_i})\widetilde{E}(W_{ijks})}{\sum_{i=1}^n I(u_s\le \tau_{i,n_i}) \widetilde{E}\{\exp(\mbf{\beta}_{jk}^{\trans}\mbf{X}_{is}+\mbf{b}_i^{\trans}\mbf{Z}_{is})\}},
\]
for $(j,k)\in\mathcal{D}$ and $s=1,\dots,m$, where $\widetilde{E}(\cdot)$ denotes the conditional expectation given $\mathcal{O}_i$. After plugging in the new $\lambda_{ijks}$ values into \eqref{complete_likelihood}, we solve the following score equation for $\mbf{\beta}_{jk}$ ($(j,k)\in\mathcal{D}$) using the one-step Newton-Raphson method:
\[
\begin{split}
\sum_{i=1}^n\sum_{s=1}^m I(u_s\le \tau_{i,n_i}) \widetilde{E}(W_{ijks})\left[\mbf{X}_{is}-\frac{\sum_{i'=1}^n I(u_s\le \tau_{i',n_{i'}}) \mbf{X}_{i's}\widetilde{E}\{\exp(\mbf{\beta}_{jk}^{\trans}\mbf{X}_{i's}+\mbf{b}_{i'}^{\trans}\mbf{Z}_{i's})\}}{\sum_{i'=1}^n I(u_s\le \tau_{i',n_{i'}}) \widetilde{E}\{\exp(\mbf{\beta}_{jk}^{\trans}\mbf{X}_{i's}+\mbf{b}_{i'}^{\trans}\mbf{Z}_{i's})\}}\right] \\
= \mbf{0}.
\end{split}
\]
Finally, we update $\mbf{\Sigma}$ by $n^{-1}\sum_{i=1}^n \widetilde{E}(\mbf{b}_i^{\otimes 2})$.

We iterate between the E-step and the M-step until convergence. 
Denote the resulting estimators of $\mbf{\theta}$ and $\mbf{\Omega}$
by $\widehat{\mbf{\theta}}=(\widehat{\mbf{\beta}}^{\trans}, \widehat{\mbf{\gamma}}^{\trans})^{\trans}
=(\{\widehat{\mbf{\beta}}_{jk}^{\trans}\}_{(j,k)\in\mathcal{D}},\widehat{\mbf{\gamma}}^{\trans})^{\trans}$ 
and $\widehat{\mbf{\Omega}}=\{\widehat{\Lambda}_{jk}\}_{(j,k)\in\mathcal{D}}$.
  
In the M-step, the jump sizes $\lambda_{jks}$'s are updated through explicit expressions, 
so optimization over a large number of parameters is avoided.  
When $n$ and $m$ are very large, the EM algorithm can be demanding, 
since it needs to perform matrix multiplication many times. Thus, we provide some strategies to speed up the computation. 
First, we estimate the jump sizes using \citet{turnbull1976empirical}'s method
and remove those time points with estimates smaller than a threshold of the order $1/m$. 
Second, we set the initial parameter estimates to be the convergent values from the EM algorithm without random effects.
Finally, we remove the time points where the jump sizes are smaller than a threshold of the order $1/m$.
Our experiences showed that all these strategies can significantly reduce 
the computation time without impairing estimation accuracy.

\section{Asymptotic Theory}

Let $|\mathcal{D}|$ denote the cardinality of $\mathcal{D}$. 
We establish the asymptotic properties of $(\widehat{\mbf{\theta}},\widehat{\mbf{\Omega}})$ under the following regularity conditions. We consider a generic subject and omit the subscript $i$ in all random quantities.   

\begin{condition}\label{cond1}
The true value of $\mbf{\theta}$, denoted by $\mbf{\theta}_0=(\mbf{\beta}_0^{\trans},\mbf{\gamma}_0^{\trans})^{\trans}
=(\{\mbf{\beta}_{0jk}^{\trans}\}_{(j,k)\in\mathcal{D}},\mbf{\gamma}_0^{\trans})^{\trans}$, 
lies in the interior of a known compact set $\Theta=\{(\mbf{\beta}^{\trans},\mbf{\gamma}^{\trans})^{\trans}:\; 
\mbf{\beta}\in\mathcal{B},\, \mbf{\gamma}\in\mathcal{C}\}$, 
where $\mathcal{B}$ is a compact set in $\mathbb{R}^{|\mathcal{D}|\times d_1}$,
and $\mathcal{C}$ is a compact set in the domain of $\mbf{\gamma}$, such that $\mbf{\Sigma}(\mbf{\gamma})$ 
is a positive-definite matrix with eigenvalues bounded away from 0 and $\infty$. 
The true value of $\mbf{\Omega}$, denoted by $\mbf{\Omega}_0=\{\Lambda_{0jk}\}_{(j,k)\in\mathcal{D}}$, 
is continuously differentiable with positive derivatives in $[0,\tau]$.  
\end{condition} 

\begin{condition}\label{bvXZ}
With probability one, $\mbf{X}(t)$ and $\mbf{Z}(t)$ are continuously differentiable in $[0,\tau]$. 
If there exist a deterministic function $a_1(t)$ and a constant vector $\mbf{a}_2$ such that 
$a_1(t)+\mbf{a}_2^{\trans}\mbf{X}(t)=0$ with probability one, then $a_1(t)=0$ for $t\in[0,\tau]$ and $\mbf{a}_2=\mbf{0}$.
\end{condition}

\begin{condition} \label{S0}
The support of $S_0$ covers all the non-absorbing states among $1,\dots, K$, where an absorbing state is a state that cannot transition to any other state. 
\end{condition}

\begin{condition}\label{tau}
The number of examination times $N$ is positive with $E(N)<\infty$. 
The conditional probability $\pr(\tau_N=\tau\,|\,N, \mbf{X}, \mbf{Z})$ is greater than some positive constant $\eta_1$. In addition, with $[0,\tau]$ being the union of the supports of $(\tau_1,\dots,\tau_N)$, the conditional densities of $(\tau_{l-1},\tau_l)$ given $(N, \mbf{X}, \mbf{Z})$, denoted by $f_l(t_1,t_2)\; (l=1,\dots,N)$, have continuous second-order partial derivatives with respect to $t_1$ and $t_2$ when $t_2-t_1\geq\eta_2$ for some positive constant $\eta_2$, and are continuously differentiable functionals with respect to $\mbf{X}$ and $\mbf{Z}$. Finally, $\pr\{\min_{1\leq l\leq N}(\tau_{l}-\tau_{l-1})\geq\eta_2 \,|\,N, \mbf{X}, \mbf{Z}\}=1$.
\end{condition}

\begin{condition}\label{ident}
For a pair of parameters $(\mbf{\theta}_1, \mbf{\Omega}_1)$ and $(\mbf{\theta}_2, \mbf{\Omega}_2)$, if 
\[
\begin{split}
\int_{\mbf{b}} \mbf{P}(0,t; \mbf{X}, \mbf{Z}, \mbf{b},\mbf{\beta}_1, \mbf{\Omega}_1)\phi(\mbf{b};\,\mbf{\Sigma}(\mbf{\gamma}_1))d\mbf{b} 
= \int_{\mbf{b}}\mbf{P}(0,t; \mbf{X},\mbf{Z},\mbf{b},\mbf{\beta}_2, \mbf{\Omega}_2)\phi(\mbf{b};\,\mbf{\Sigma}(\mbf{\gamma}_2))d\mbf{b} 
\end{split}
\]
with probability one for any $t\in[0,\tau]$, then $\mbf{\beta}_1=\mbf{\beta}_2$, $\mbf{\gamma}_1=\mbf{\gamma}_2$, and $\mbf{\Omega}_1(t)=\mbf{\Omega}_2(t)$ for $t\in[0,\tau]$.
\end{condition}

\begin{condition}\label{nonsingular}
If there exist a $K\times K$ matrix-valued function $\mbf{a}_3(t;\mbf{b})$ and a $d_3$-vector $\mbf{a}_4$ such that
\[
\begin{split}
\int_{\mbf{b}}\Biggl[\int_0^t\mbf{P}(0,s;\mbf{X},\mbf{Z},\mbf{b},\mbf{\beta}_0,\mbf{\Omega}_0)d\mbf{a}_3(s;\mbf{b})\mbf{P}(s,t;\mbf{X},\mbf{Z},\mbf{b},\mbf{\beta}_0,\mbf{\Omega}_0)\Biggr. \\
\Biggl.+\mbf{P}(0,t;\mbf{X},\mbf{Z},\mbf{b},\mbf{\beta}_0,\mbf{\Omega}_0)\frac{{\mbf{a}}_4^{\trans}\phi^\prime_{\mbf{\gamma}}(\mbf{b};\mbf{\Sigma}(\mbf{\gamma}_0))}{\phi(\mbf{b};\mbf{\Sigma}(\mbf{\gamma}_0))}\Biggr]\phi(\mbf{b};\mbf{\Sigma}(\mbf{\gamma}_0))d\mbf{b}=\mbf{0}
\end{split}
\]
with probability one for any $t\in[0,\tau]$, where $\phi^\prime_{\mbf{\gamma}}$ is the derivative of $\phi(\mbf{b};\,\mbf{\Sigma}(\mbf{\gamma}))$ with respect to $\mbf{\gamma}$, then $\mbf{a}_3(t;\mbf{b})=\mbf{0}$ for $t\in[0,\tau]$ and $\mbf{a}_4=\mbf{0}$.
\end{condition}

\begin{remark}
Conditions~\ref{cond1} and \ref{bvXZ} are standard for regression analysis with time-dependent covariates. 
Condition~\ref{S0} assumes that the initial state can be any non-absorbing state, 
which ensures that all possible transitions can occur during the study. 
Condition~\ref{tau} pertains to the joint distribution of the examination times. 
First, it requires that the largest examination time reaches $\tau$ with positive probability. 
Second, it requires smoothness of the joint density of the examination times, 
which is used to prove the Donsker property of some function classes 
and the smoothness of the least favorable direction. 
Finally, this condition requires any two successive examination times to be separated by a positive gap; 
otherwise, transition times may be exactly observed, which calls for a different theoretical treatment. 
Conditions~\ref{ident} and \ref{nonsingular} ensure the identifiability of the proposed model 
and the invertibility of the information operator along any submodel under true parameter values. 
If $\mbf{X}$ and $\mbf{Z}$ are both time-independent, then Conditions~\ref{ident} and \ref{nonsingular} 
can be replaced by conditions
(1) $\mbf{Z}$ is linearly independent, that is, any symmetric matrix $\mbf{C}$ satisfying $\mbf{Z}^{\trans}\mbf{C}\mbf{Z}=0$ with probability one must be a zero matrix. 
(2) $\mbf{\Sigma}(\mbf{\gamma}_1) = \mbf{\Sigma}(\mbf{\gamma}_2)$ implies $\mbf{\gamma}_1 = \mbf{\gamma}_2$.
\end{remark}

We state the strong consistency of $(\widehat{\mbf{\theta}},\widehat{\mbf{\Omega}})$ 
and the limiting distribution of $n^{1/2}(\widehat{\mbf{\theta}}-\mbf{\theta}_0)$.

\begin{theorem} \label{thm1}
Under Conditions~\ref{cond1}-\ref{ident}, $\|\widehat{\mbf{\theta}}-\mbf{\theta}_0\|+\sum_{(j,k)\in\mathcal{D}}\|\widehat{\Lambda}_{jk}-\Lambda_{0jk}\|_{\infty}\rightarrow 0$ almost surely, where $\|\cdot\|$ is the Euclidean norm and $\|\cdot\|_{\infty}$ is the supremum norm over $[0,\tau]$.
\end{theorem}

\begin{theorem} \label{thm2}
Under Conditions~\ref{cond1}-\ref{nonsingular}, $n^{1/2}(\widehat{\mbf{\theta}}-\mbf{\theta}_0)$ converges in distribution to a multivariate normal vector with mean zero and covariance matrix that attains the semiparametric efficiency bound.
\end{theorem}

The proofs of the theorems are provided in the Appendix. 
The limiting covariance matrix of $\widehat{\mbf{\theta}}$ can be consistently estimated through
profile likelihood \citep{murphy2000profile}. 
Denote the profile log-likelihood for $\mbf{\theta}$ by 
$pl_n(\mbf{\theta}) = \max_{\mbf{\Omega}}\log L_n(\mbf{\theta}, \mbf{\Omega})$,
which can be obtained from the above EM algorithm with $\mbf{\theta}$ fixed. 
Let $pl_{ni}$ denote the $i$th subject's contribution to $pl_n$ 
and $\mbf{e}_j$ denote the $j$th canonical vector of the same dimension as $\mbf{\theta}$. 
Then the covariance matrix of $\widehat{\mbf{\theta}}$ can be estimated by the inverse of the matrix whose $(j,k)$th element is
$
\sum_{i=1}^n \{pl_{ni}(\widehat{\mbf{\theta}}+h_n\mbf{e}_j)-pl_{ni}(\widehat{\mbf{\theta}})\}\{pl_{ni}(\widehat{\mbf{\theta}}+h_n\mbf{e}_k)-pl_{ni}(\widehat{\mbf{\theta}})\}/h_n^2,
$
where $h_n$ is some constant of order $n^{-1/2}$.

\section{Simulation Studies}

We conducted a series of simulation studies with three states, which are numbered 1, 2, and 3. 
Possible transitions include 1 to 2 and 2 to 3. We generated two time-independent covariates, 
$X_1\sim\text{Ber}(0.5)$ and $X_2\sim \text{Unif}(0,1)$, and random effect $b\sim N(0,\sigma^2)$ with $\sigma^2=0.8$. 
We set $\Lambda_{12}(t) = \log(1+0.3t)$, $\Lambda_{23}(t) = 0.3t$, $(\beta_{121},\beta_{122}) = (0.5,-0.5)$,
$(\beta_{231},\beta_{232}) = (0.4, 0.2)$,
where $\Lambda_{jk}$ pertains to the transition from $j$ to $k$,
and $\beta_{jkl}$ pertains to the transition from $j$ to $k$ and the $l$th covariate.
The initial state of each subject was 1 or 2 with equal probabilities. 
We generated six potential examination times for each subject, 
with the first being $\text{Unif}(0,1)$, and the gap between any two successive examination times 
being $0.05+\text{Unif}(0,1)$. 
We set the study end time $\tau=3$ and excluded all the examinations beyond $\tau$. 
We simulated 10,000 replicates with $n=400$, 800, or 1600. 

We applied all three computational strategies described in Section~\ref{subsec: npmle}. 
We removed all the time points whose jump sizes were smaller than 0.0001.
We set the initial values of $\mbf{\beta}_{jk}$'s to $\mbf{0}$ and the initial values of $\lambda_{jks}$'s to $1/m$. 
In addition, we set the initial value of $\sigma^2$ to 1. 
The convergence criterion was that the maximal change in the parameter estimates at two successive iterations 
is smaller than $0.0001$. 
For the variance estimation, we set $h_n=5n^{-1/2}$, although the results differed only in the third decimal 
place when $h_n$ ranged from $n^{-1/2}$ to $10n^{-1/2}$.  

For comparisons, we included the \textbf{msm} package \citep{jackson2011multi}, which fits time-homogeneous 
or piecewise homogeneous Markov models. 
The implementation is via the \texttt{msm()} function, 
and the change points are specified in the \texttt{pci} argument 
when piecewise constant transition intensities are assumed. 
We let the function automatically generate the initial parameter values and 
used the default settings for maximum likelihood estimation. 
We placed the change points of the intensities at 0.5, 1, 1.5, 2, and 2.5.  

Table~\ref{sim1_res} summarizes the estimation results on the regression parameters. 
The EM algorithm converged in all replicates. 
The biases of the parameter estimators are small and decrease as $n$ increases. 
The variance estimators are accurate, and the confidence intervals have proper coverage probabilities. 
The parameter estimators in the \textbf{msm} package are severely biased.
When $n=400$, nearly 5\% of the replicates failed due to sparse data within some of the pieces.      

Figure~\ref{sim1_fig} shows the estimation results on the cumulative transition intensity functions. 
The median of the proposed estimates is almost identical to the truth, 
whereas the median of the estimates from \textbf{msm} deviates substantially from the truth. 


\begin{table}
\caption{Estimation of the regression parameters in the simulation studies with three states. \label{sim1_res}}
\begin{center}
\begin{tabular}{ll@{\hspace{1.5em}}*{4}{r}@{\hspace{1.5em}}*{4}{r}}
\toprule
  &  & \multicolumn{4}{c}{Proposed methods} & \multicolumn{4}{c}{\textbf{msm} package} \\
  \cmidrule(lr){3-6} \cmidrule(lr){7-10}
  & Parameter & Bias & SE & SEE & CP & Bias & SE & SEE & CP \\ 
  $n=400$ & $\beta_{121}=0.5$ & 0.014 & 0.265 & 0.259 & 95.0 & $-0.091$ & 0.209 & 0.207 & 92.4 \\ 
  & $\beta_{122}=-0.5$ & $-0.021$ & 0.458 & 0.448 & 94.7 & 0.087 & 0.363 & 0.356 & 94.0 \\ 
  & $\beta_{231}=0.4$ & 0.013 & 0.206 & 0.198 & 94.5 & $-0.078$ & 0.156 & 0.147 & 90.4 \\ 
  & $\beta_{232}=0.2$ & 0.005 & 0.350 & 0.339 & 94.5 & $-0.053$ & 0.268 & 0.254 & 92.8 \\ 
  & $\sigma^2=0.8$ & 0.060 & 0.422 & 0.396 & 95.1 &  &  &  &  \\ 
  $n=800$ & $\beta_{121}=0.5$ & 0.010 & 0.181 & 0.181 & 95.4 & $-0.092$ & 0.145 & 0.146 & 90.4 \\ 
  & $\beta_{122}=-0.5$ & $-0.008$ & 0.315 & 0.311 & 95.1 & 0.095 & 0.253 & 0.251 & 93.1 \\ 
  & $\beta_{231}=0.4$ & 0.007 & 0.139 & 0.138 & 95.3 & $-0.079$ & 0.107 & 0.104 & 87.4 \\ 
  & $\beta_{232}=0.2$ & 0.006 & 0.240 & 0.236 & 94.6 & $-0.053$ & 0.187 & 0.179 & 92.8 \\ 
  & $\sigma^2=0.8$ & 0.024 & 0.270 & 0.263 & 95.5 &  &  &  &  \\ 
  $n=1600$ & $\beta_{121}=0.5$ & 0.002 & 0.127 & 0.126 & 94.8 & -0.096 & 0.103 & 0.103 & 84.8 \\ 
  & $\beta_{122}=-0.5$ & $-0.000$ & 0.217 & 0.216 & 95.0 & 0.100 & 0.176 & 0.177 & 91.2 \\ 
  & $\beta_{231}=0.4$ & 0.000 & 0.098 & 0.096 & 94.9 & $-0.080$ & 0.076 & 0.073 & 79.7 \\ 
  & $\beta_{232}=0.2$ & $-0.002$ & 0.168 & 0.164 & 94.7 & $-0.057$ & 0.132 & 0.126 & 91.3 \\ 
  & $\sigma^2=0.8$ & $-0.004$ & 0.181 & 0.178 & 95.6 &  &  &  &  \\ 
  \bottomrule
    \end{tabular}
\end{center}
Note: Bias and SE denote the median bias and empirical standard error, respectively.
SEE denotes the median of the standard error estimator, and
CP denotes the empirical coverage percentage of the 95\% confidence interval. 
The log transformation was used to construct the confidence interval for $\sigma^2$. 
For \textbf{msm} with $n=400$, each entry is based on 9,490 replicates. 
All other entries are based on 10,000 replicates.
\end{table}

\begin{figure}
\begin{center}
\includegraphics[width=\textwidth]{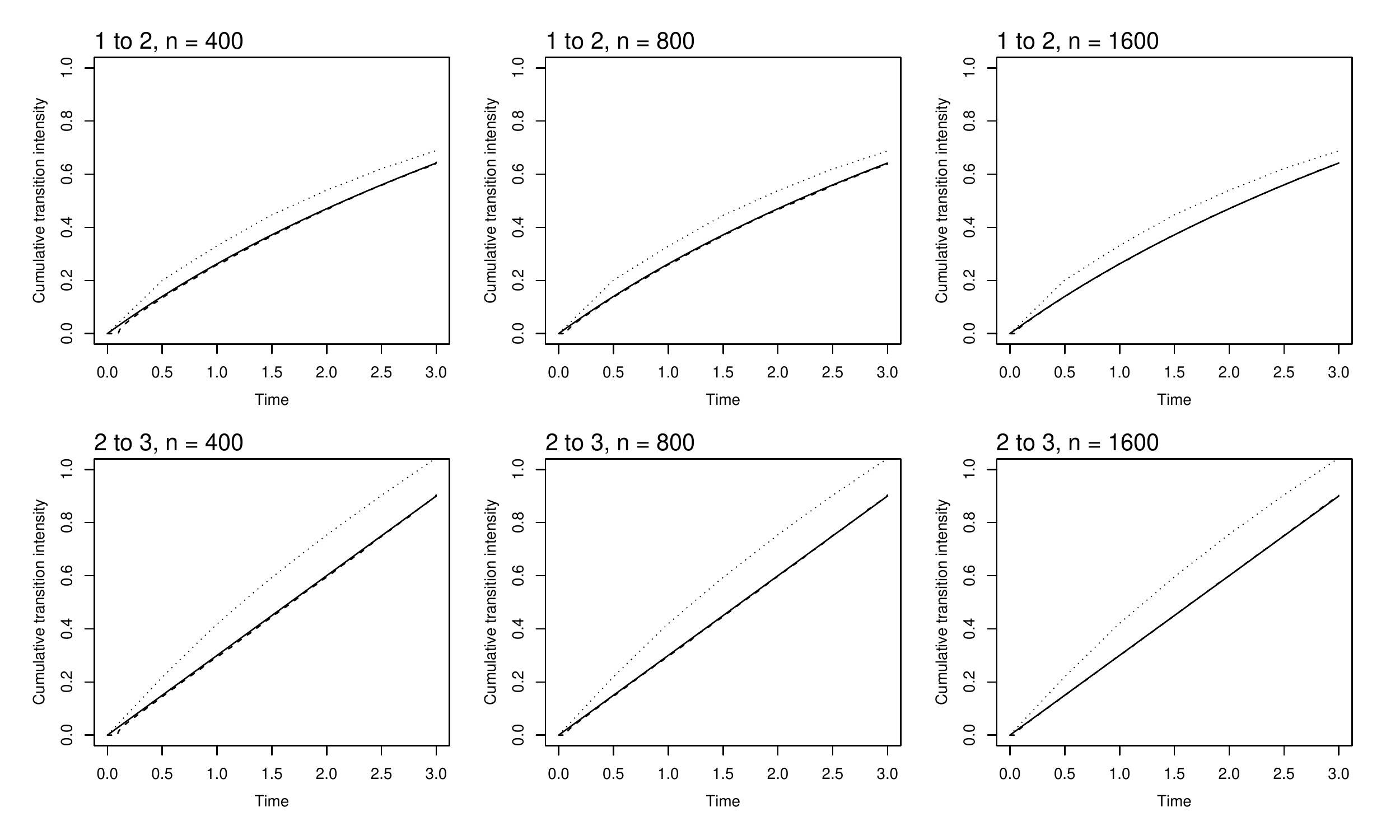}
\end{center}
\caption{Estimation of the cumulative transition intensities in the simulation studies with three states. 
The solid, dashed, and dotted curves show the true values, 
the median estimates based on 10,000 replicates using the proposed methods, 
and the median estimates based on 10,000 replicates (9,490 replicates for $n=400$) using the \textbf{msm} package, respectively. \label{sim1_fig}}
\end{figure}

Section S.2 of the supplementary materials reports simulation studies with more complex disease processes. 
The proposed methods continued to perform well.  

\section{Application}

The ARIC study recruited 15,792 participants aged 45--64 years in 1987--1989 from four communities: 
Forsyth County, North Carolina; Jackson, Mississippi; suburban Minneapolis, Minnesota; and Washington County, Maryland. 
All participants received a baseline examination upon enrollment, 
followed by three examinations conducted approximately every three years between 1990 and 1998, 
and three further examinations in 2011--2013, 2016--2017, and 2018--2019. 
At each of the last three examinations, MCI and dementia were assessed from current and longitudinal cognitive tests 
by a panel of reviewers (4 physicians and 4 neuropsychologists), 
yielding a syndromic diagnosis such that one of three states: normal, MCI, or dementia
was determined at each examination \citep{knopman2016mild}. 
It is unlikely for an individual to return to a less severe state from a more severe cognitive impairment state 
(e.g., MCI to normal, dementia to MCI). 

We considered a three-state progressive model (i.e., normal to MCI to dementia). 
The transitions between the three states were interval-censored. 
The time scale for the analysis was years since the baseline examination. 
We evaluated the effects of the following baseline risk factors on the transitions between the states:
age (years), gender (female vs. male), race-center (Forsyth County; Black, Jackson; White, Minneapolis; and White, Washington County), education level (basic or intermediate vs. advanced), diabetes (no vs. yes), cigarette smoking status (non-smoker vs. smoker), body mass index (kg/m$^2$), and systolic blood pressure (mmHg).  
We included a random intercept to capture the potential dependence between transitions. 
After removing participants with unknown states at the fifth examination or missing data on risk factors, 
a total of 6,407 participants remained. 
The mean follow-up time was 27.5 years and the median was 28.8 years. 
Table~\ref{statetable} summarizes the frequency that each pair of states was observed over successive examinations. 
The second, third, and fourth examinations are omitted, 
because no information about MCI or dementia was collected at those three examinations.

\begin{table}
\caption{Frequency for each pair of states over successive examinations among 6,407 participants in the ARIC study. \label{statetable}}
\begin{center}
\begin{tabular}{l*{3}{@{\hspace{3em}}r}}
\toprule
& \multicolumn{3}{c}{To} \\
\cmidrule(lr){2-4} 
{From} & Normal & MCI & Dementia \\
\midrule
Normal & 8,936 & 2,052 & 459 \\ 
MCI & 0 & 332 & 136 \\
Dementia & 0 & 0 & 214 \\
\bottomrule
\end{tabular}
\end{center}
\end{table}

We used the same estimation procedure as in the simulation studies, 
except that the threshold for jump sizes was set to $10^{-6}$. 
The number of unique time points was 3,155 in the beginning of the analysis and 147 at the end. 
The computation time was about two hours on a computer with Windows 10 (2.1 GHz processor, 32 GB RAM, 64-bit). 
The estimation results on the regression parameters are presented in Table~\ref{aric_res}. 
Older people have significantly higher risk of developing both MCI and dementia,
males are more likely to develop MCI, advanced education can significantly reduce the risk of progression from MCI to dementia,
people with diabetes have significantly higher risk of MCI, 
and baseline body mass index and systolic blood pressure are both positively associated with the risk of MCI. 
The variance of the random intercept was estimated at 0.9282, with estimated standard error of 0.1461, 
suggesting strong dependence between the transition from normal to MCI and the transition from MCI to dementia.  

\begin{table}
\caption{Estimation results on the regression parameters in the ARIC study. \label{aric_res}}
\begin{center}
\scalebox{0.9}{
\begin{tabular}{l*{3}{r}@{\hspace{2em}}*{3}{r}}
\toprule
 & \multicolumn{3}{c}{Normal to MCI} & \multicolumn{3}{c}{MCI to dementia} \\ 
 \cmidrule(lr){2-4} \cmidrule(lr){5-7} 
 Covariate & Estimate & St error & $p$-value & Estimate & St error & $p$-value \\
 \midrule
Age (years) & 0.0892 & 0.0031 & $<$0.0001 & 0.1110 & 0.0057 & $<$0.0001 \\ 
  Male & 0.3188 & 0.0520 & $<$0.0001 & 0.1636 & 0.1018 & 0.1080 \\ 
  Advanced education & $-0.1003$ & 0.0525 & 0.0561 & $-0.6164$ & 0.1097 & $<$0.0001 \\ 
  Diabetes & 0.5587 & 0.0994 & $<$0.0001 & 0.3962 & 0.1651 & 0.0164 \\ 
  Smoker & 0.1549 & 0.0661 & 0.0191 & 0.2009 & 0.1355 & 0.1382 \\ 
  Body mass index (kg/m$^2$) & 0.0213 & 0.0049 & $<$0.0001 & 0.0164 & 0.0090 & 0.0684 \\ 
  Systolic blood pressure (mmHg) & 0.0051 & 0.0015 & 0.0007 & 0.0058 & 0.0028 & 0.0383 \\ 
  Black, Jackson & $-0.0008$ & 0.0792 & 0.9919 & 1.4692 & 0.1614 & $<$0.0001 \\ 
  White, Minneapolis & $-0.2052$ & 0.0721 & 0.0044 & 0.4848 & 0.1621 & 0.0028 \\ 
  White, Washington County & $-0.0828$ & 0.0722 & 0.2515 & 0.5218 & 0.1593 & 0.0011 \\
  \bottomrule
  \end{tabular}}
\end{center}
Note: For each categorical variable, the group not shown is the reference group.
\end{table}

Figure~\ref{aric_Lambda} shows the estimated cumulative transition intensities for subjects with 
different combinations of education level and diabetes status and
with all other covariates set to be the sample medians. 
The left panel shows that having diabetes considerably increases the risk of MCI. 
The right panel shows that subjects with an advanced education have much lower risk of dementia 
than those without advanced education.  

\begin{figure}
\begin{center}
\includegraphics[width=\textwidth]{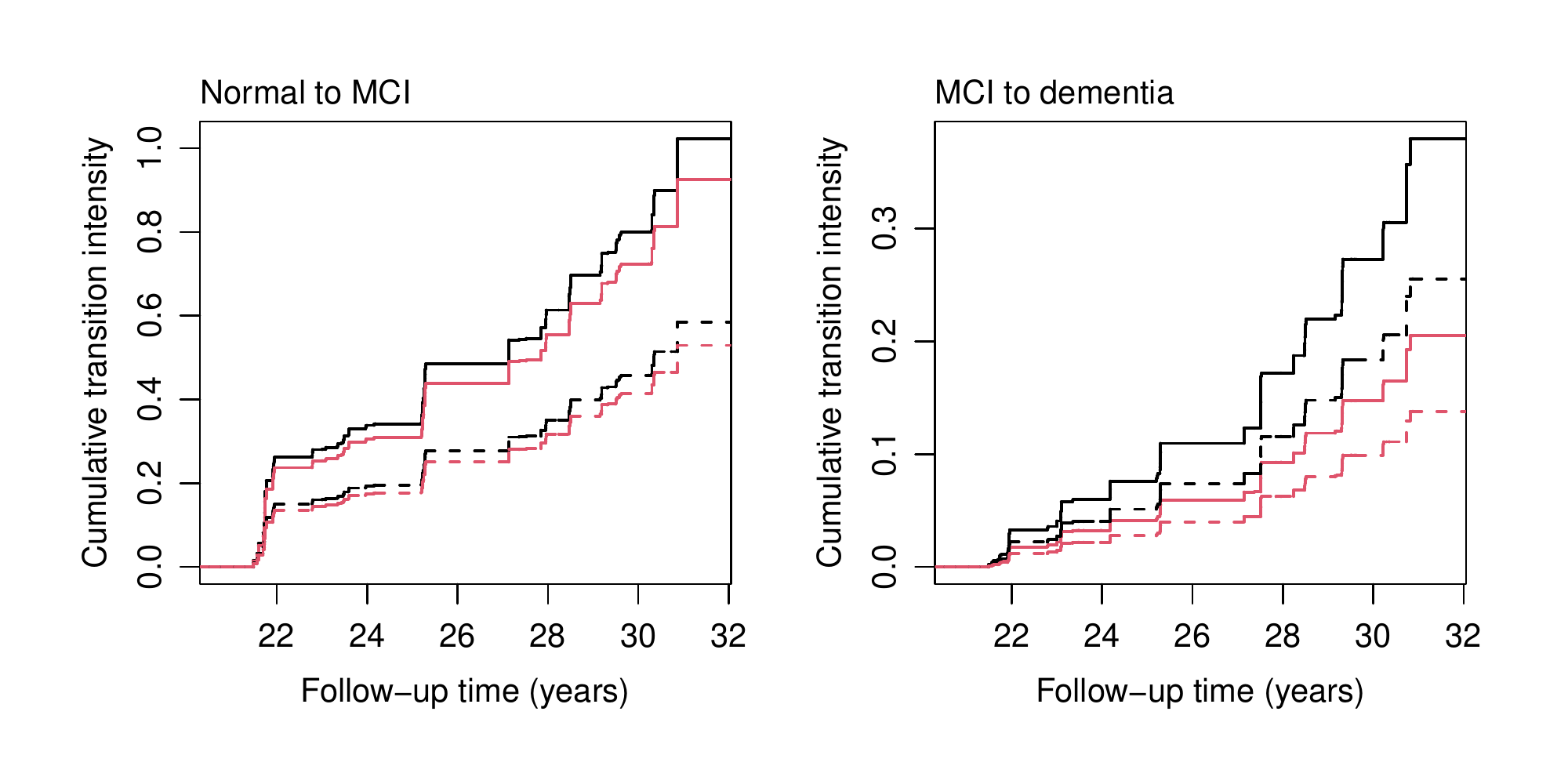}
\end{center}
\caption{Estimated cumulative transition intensities for subjects with different combinations
of education level and diabetes status at baseline in the ARIC study. 
The black curves pertain to subjects without advanced education, 
and the red curves pertain to subjects with advanced education. 
The solid curves pertain to subjects with diabetes, 
and the dashed curves pertain to subjects without diabetes. 
The other covariates are set to be the sample medians. \label{aric_Lambda}}
\end{figure}   

Figure~\ref{aric_prob} shows the estimated transition probabilities from normal and MCI to different states over five-year time 
intervals, with the covariates equal to the sample medians. 
Unsurprisingly, the probabilities of progression toward more severe states generally increase over time.   

\begin{figure}
\begin{center}
\includegraphics[width=.9\textwidth]{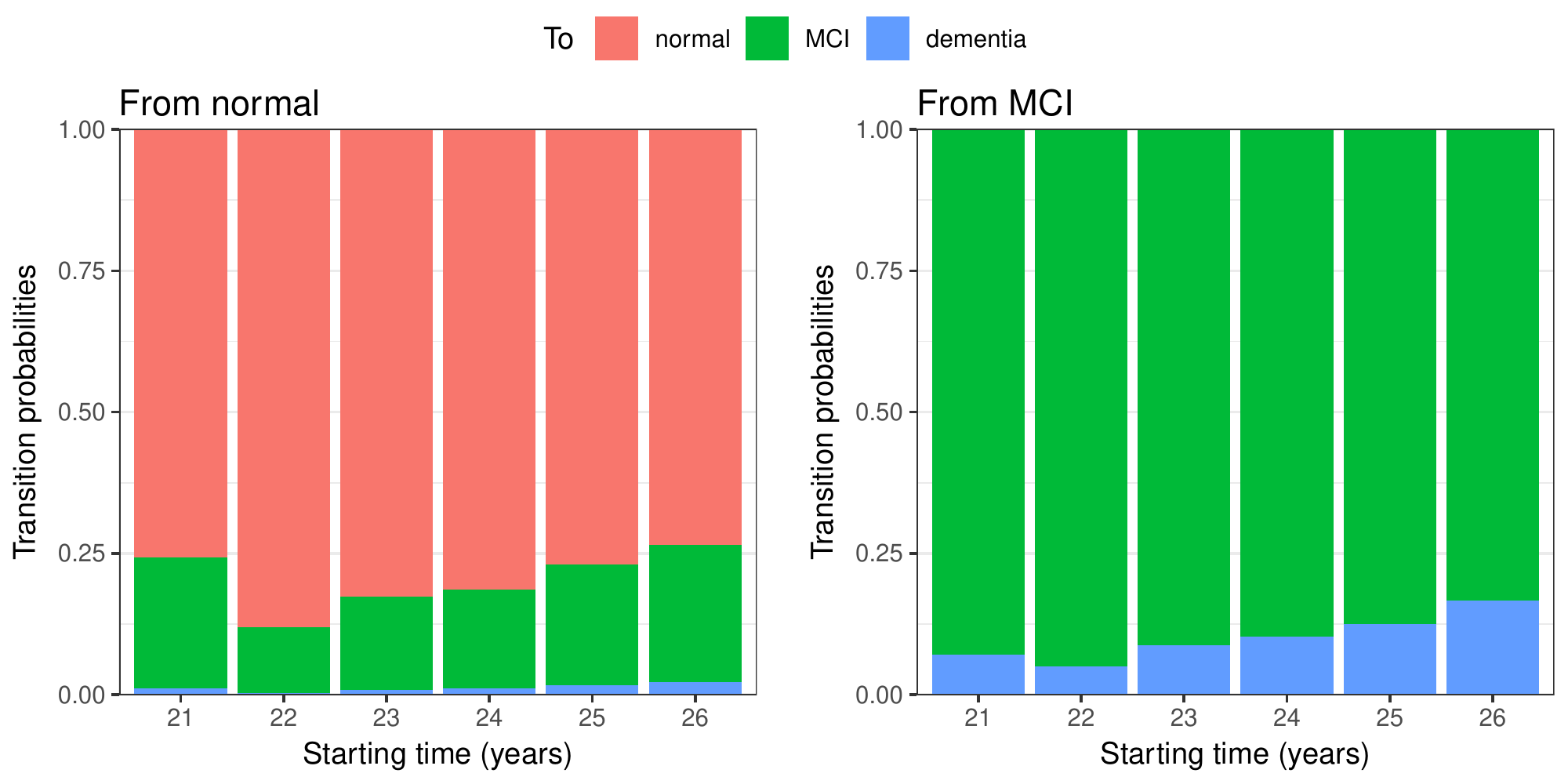}
\end{center}
\caption{Estimated transition probabilities over five-year time intervals for subjects
with median covariate values in the ARIC study. \label{aric_prob}}
\end{figure}

\section{Discussion}


We have developed powerful methods for analyzing very general interval-censored multi-state data. 
Unlike spline-based methods, we estimate the baseline transition intensity functions in a completely nonparametric manner 
and avoid any tuning parameters. 
We have established for the first time a rigorous asymptotic theory for the semiparametric estimation of multi-state models
under interval censorship. 
We have shown through extensive simulation studies that the proposed methods outperform the existing methods implemented 
in the \textbf{msm} package. 

Our work contains major innovations. 
First, the proposed EM algorithm is much more sophisticated and computationally challenging than that of
\citet{zeng2017maximum} because it is necessary to consider all possible transition paths 
when estimating conditional expectations. No such calculations were required in the case of 
multivariate interval-censored data. We also provide several strategies to significantly speed up the computation.   
Second, the presence of product integration poses substantial theoretical challenges, 
especially in proving the Donsker property of the relevant function classes and in handling the score and information operators. 
We have addressed these new challenges by using the results on product integration from \citet{andersen1993statistical}.      

Our formulation allows for an absorbing state and assumes that the transition time to the absorbing state is interval-censored. Sometimes the absorbing state can be exactly observed (e.g., death). Therefore, an interesting extension of our work is to study a mixture of interval- and right-censored data, where the transition times among the non-absorbing states are interval-censored, and the transition time to the absorbing state is right-censored. 

We have only considered routinely scheduled examinations and noninformative loss to follow-up. 
Inspired by the recent work of \citet{lawless2019new} and \citet{cook2021independence}, 
we may extend our work to allow for disease-driven examinations and informative loss to follow-up
by jointly modeling the disease process, the recurrent examination process, and the loss to follow-up process. 
Specifically, we may add loss to follow-up as a new state to the original state space, and we may consider a two-dimensional state space with one component characterizing the disease state and the other component counting the number of examinations. The former is relatively easy, while the latter can be very challenging unless strong assumptions about the transition intensities are made.

\section*{Acknowledgments}

The Atherosclerosis Risk in Communities Study is carried out as a collaborative study supported by National Heart, Lung, and Blood Institute contracts (75N92022D00001, 75N92022D00002, 75N92022D00003, 75N92022D00004, 75N92022D00005). The ARIC Neurocognitive Study is supported by U01HL096812, U01HL096814, U01HL096899, U01HL096902, and U01HL096917 from the NIH (NHLBI, NINDS, NIA and NIDCD). The authors thank the staff and participants of the ARIC study for their important contributions.
This research was supported by the National Institutes of Health grant R01HL149683.

\section*{Supplementary Materials}

The supplementary materials contain three lemmas and additional simulation results. 

\section*{Appendix. Proofs of Theorems}

The proofs of Theorems~\ref{thm1} and \ref{thm2} make use of three lemmas, which are stated and proved in Section S.1 of the supplementary materials. 
We use the notation: $\mathbb{P}_n$ denotes the empirical measure for $n$ independent subjects, $\mathbb{P}$ denotes the true probability measure, and $\mathbb{G}_n=n^{1/2}(\mathbb{P}_n-\mathbb{P})$ is the corresponding empirical process. 
Let $L(\mbf{\theta}, \mbf{\Omega})$ denote the likelihood function for a single subject
\[
L(\mbf{\theta}, \mbf{\Omega}) = \int_{\mbf{b}}\prod_{l=1}^{N}\mbf{P}(\tau_{l-1},\tau_{l}; \mbf{b}, \mbf{\beta}, \mbf{\Omega})^{(S_{l-1}, S_{l})}\phi(\mbf{b};\mbf{\Sigma}(\mbf{\gamma}))d\mbf{b},
\]
and let $\ell(\mbf{\theta}, \mbf{\Omega})$ denote the corresponding log-likelihood function.
For simplicity, we suppress the arguments $\mbf{X}$ and $\mbf{Z}$ in any transition probability matrix of the form $\mbf{P}(t_1,t_2; \mbf{X}, \mbf{Z}, \mbf{b}, \mbf{\beta}, \mbf{\Omega})$. 

\begin{proof}[Proof of Theorem~\ref{thm1}]
We first show that $\limsup_n\widehat{\Lambda}_{jk}(\tau)<\infty$ with probability one for any $(j,k)\in\mathcal{D}$. 
By the strong law of large numbers, $(\mathbb{P}_n-\mathbb{P})\ell(\mbf{\theta}_0, \mbf{\Omega}_0)\rightarrow0$ almost surely.
Then by the definition of the parameter estimators,
\[
\liminf_n\mathbb{P}_n\ell(\widehat{\mbf{\theta}}, \widehat{\mbf{\Omega}})\ge \liminf_n\mathbb{P}_n\ell(\mbf{\theta}_0, \mbf{\Omega}_0)=\mathbb{P}\ell(\mbf{\theta}_0, \mbf{\Omega}_0)
\]
with probability one.
In addition,
\begin{align*}
& \liminf_n\mathbb{P}_n\ell(\widehat{\mbf{\theta}}, \widehat{\mbf{\Omega}}) \\
={} & \liminf_n\mathbb{P}_n\log\left[\int_{\mbf{b}}\left\{\prod_{l=1}^N\mbf{P}(\tau_{l-1}, \tau_l;\mbf{b},\widehat{\mbf{\beta}}, \widehat{\mbf{\Omega}})^{(S_{l-1}, \,S_l)}\right\}\phi(\mbf{b};\mbf{\Sigma}(\widehat{\mbf{\gamma}}))d\mbf{b}\right] \\
\le{} & \liminf_n\mathbb{P}_nI(S_0=S_N)\log\left[\int_{\mbf{b}}\mbf{P}(0,\tau_N;\mbf{b},\widehat{\mbf{\beta}}, \widehat{\mbf{\Omega}})^{(S_0, \,S_0)}\phi(\mbf{b};\mbf{\Sigma}(\widehat{\mbf{\gamma}}))d\mbf{b}\right] \\
={} & 
\begin{multlined}[t]
\liminf_n\mathbb{P}_nI(S_0=S_N) \\
\times\log\Biggl[\int_{\mbf{b}}\exp\biggl\{-\sum_{k:\,(S_0,k)\in\mathcal{D}}\int_0^{\tau_N}\exp\left\{\mbf{\beta}_{S_0k}^{\trans}\mbf{X}(s)+\mbf{b}^{\trans}\mbf{Z}(s)\right\}d\widehat{\Lambda}_{S_0k}(s)\biggr\}\times\phi(\mbf{b};\mbf{\Sigma}(\widehat{\mbf{\gamma}}))d\mbf{b}\Biggr] 
\end{multlined} \\
\le{} & 
\begin{multlined}[t]
\liminf_n\mathbb{P}_nI(S_0=S_N,\, \tau_N=\tau) \\
\times\log\Biggl[\int_{\mbf{b}}\exp\biggl\{-\sum_{(S_0,k)\in\mathcal{D}}\int_0^{\tau}\exp\left\{\beta_{S_0k}^{\trans}\mbf{X}(s)+\mbf{b}^{\trans}\mbf{Z}(s)\right\}d\widehat{\Lambda}_{S_0k}(s)\biggr\}\times\phi(\mbf{b};\mbf{\Sigma}(\widehat{\mbf{\gamma}}))d\mbf{b}\Biggr]
\end{multlined} \\
\le{} &
\begin{multlined}[t]
\liminf_n\mathbb{P}_nI(S_0=S_N,\, \tau_N=\tau) \\
\times\log\Biggl[\int_{\mbf{b}}\exp\biggl\{-\sum_{(S_0,k)\in\mathcal{D}}\exp\left(-M-M\|\mbf{b}\|\right)\widehat{\Lambda}_{S_0k}(\tau)\biggr\}\times\phi(\mbf{b};\mbf{\Sigma}(\widehat{\mbf{\gamma}}))d\mbf{b}\Biggr],
\end{multlined} 
\end{align*}
where $M=\sup_{t\in[0,\tau]}\left\{\sup_{\mbf{X},\mbf{\beta}_{jk}}|\mbf{\beta}_{jk}^{\trans}\mbf{X}(t)|+\sup_{\mbf{Z}}|\mbf{Z}(t)|\right\}$ and is finite under Condition 2. 
Since for any $x>0$, $e^{-x}\le x^{-1}$, $\liminf_n\mathbb{P}_n\ell(\widehat{\mbf{\theta}}, \widehat{\mbf{\Omega}})$ can be further bounded from above by  
\begin{align*}
& 
\begin{multlined}[t]
\liminf_n\mathbb{P}_nI(S_0=S_N,\, \tau_N=\tau) \\
\times\log\Biggl[\int_b\biggl\{\exp\left(-M-M\|b\|\right)\times\sum_{(S_0,k)\in\mathcal{D}}\widehat{\Lambda}_{S_0k}(\tau)\biggr\}^{-1}\times\phi(\mbf{b};\mbf{\Sigma}(\widehat{\mbf{\gamma}}))d\mbf{b}\Biggr]
\end{multlined} \\
\le{} & \liminf_n\mathbb{P}_nI(S_0=S_N,\, \tau_N=\tau)\Biggl[C(M)-\log\biggl\{\sum_{(S_0,k)\in\mathcal{D}}\widehat{\Lambda}_{S_0k}(\tau)\biggr\}\Biggr],
\end{align*}
where $C(M)$ is a deterministic function of $M$. 
Under Condition 4, $\lim_n\mathbb{P}_nI(S_0=S_N,\, \tau_N=\tau) = \pr(S_0=S_N,\, \tau_N=\tau)>0$, such that
\[
\limsup_n\log\biggl\{\sum_{(S_0,k)\in\mathcal{D}}\widehat{\Lambda}_{S_0k}(\tau)\biggr\}\le C(M)-O(1)\times\mathbb{P}\ell(\mbf{\theta}_0, \mbf{\Omega}_0)<\infty
\]
with probability one. Since $S_0$ can take an arbitrary value in $\{1,\dots, K\}$ under Condition~\ref{S0}, the above inequality implies that $\limsup_n\widehat{\Lambda}_{jk}(\tau)\le w <\infty$ with probability one for some positive finite constant $w$ and any $(j,k)\in\mathcal{D}$. 

We have shown that each component of $\widehat{\mbf{\Omega}}$ has bounded total variation in $[0,\tau]$. By Helly's selection lemma, for any subsequence of $(\widehat{\mbf{\theta}}, \widehat{\mbf{\Omega}})$, we can choose a further subsequence such that $\widehat{\mbf{\Omega}}$ converges to $\mbf{\Omega}^*$ pointwise in $[0,\tau]$, and that $\widehat{\mbf{\theta}}=(\widehat{\mbf{\beta}}, \widehat{\mbf{\gamma}})$ converges to $\mbf{\theta}^*=(\mbf{\beta}^*, \mbf{\gamma}^*)$. Next, we will show that $(\mbf{\theta}^*, \mbf{\Omega}^*)=(\mbf{\theta}_0, \mbf{\Omega}_0)$. 
Define the function
\[
m(\mbf{\theta}, \mbf{\Omega})=\log\left\{\frac{L(\mbf{\theta}, \mbf{\Omega})+L(\mbf{\theta}_0, \mbf{\Omega}_0)}{2}\right\}
\]
and class
\[
\mathcal{M} = \left\{m(\mbf{\theta}, \mbf{\Omega}):\; \mbf{\theta}\in\Theta, \mbf{\Omega}\in\mathcal{L}_w\right\},
\]
where $\mathcal{L}_w$ is the set of $|\mathcal{D}|$-dimensional non-decreasing functions $\{\Lambda_{jk}\}_{(j,k)\in\mathcal{D}}$ whose
total variations in $[0,\tau]$ are bounded by $w$, with $\Lambda_{jk}(0) = 0$. 
By the concavity of the log function, 
\[
\mathbb{P}_nm(\widehat{\mbf{\theta}}, \widehat{\mbf{\Omega}})\ge\mathbb{P}_n\frac{\ell(\widehat{\mbf{\theta}}, \widehat{\mbf{\Omega}})+\ell(\mbf{\theta}_0, \mbf{\Omega}_0)}{2}\ge\mathbb{P}_n\ell(\mbf{\theta}_0, \mbf{\Omega}_0)=\mathbb{P}_nm(\mbf{\theta}_0, \mbf{\Omega}_0),
\]
which implies 
\begin{equation}\label{eq1}
(\mathbb{P}_n-\mathbb{P})m(\widehat{\mbf{\theta}}, \widehat{\mbf{\Omega}})+\mathbb{P}m(\widehat{\mbf{\theta}}, \widehat{\mbf{\Omega}})\ge(\mathbb{P}_n-\mathbb{P})m(\mbf{\theta}_0, \mbf{\Omega}_0)+\mathbb{P}m(\mbf{\theta}_0, \mbf{\Omega}_0).
\end{equation}
We show in Lemma S.1 that $\mathcal{M}$ is a Donsker class, and we have verified that $m(\widehat{\mbf{\theta}}, \widehat{\mbf{\Omega}})\in\mathcal{M}$. Thus, $(\mathbb{P}_n-\mathbb{P})m(\widehat{\mbf{\theta}}, \widehat{\mbf{\Omega}})\rightarrow0$ almost surely.
In addition, $(\mathbb{P}_n-\mathbb{P})m(\mbf{\theta}_0, \mbf{\Omega}_0) = (\mathbb{P}_n-\mathbb{P})\ell(\mbf{\theta}_0, \mbf{\Omega}_0) \rightarrow0$ almost surely.
Because $\left|\prod_{l=1}^N\mbf{P}(\tau_{l-1}, \tau_l;\mbf{b},\mbf{\beta}, \mbf{\Omega})^{(S_{l-1}, \,S_l)}\right|<1$ for any $\mbf{\beta}\in\mathcal{B}$ and $\mbf{\Omega}\in\mathcal{L}_w$ with probability one, we conclude that with respect to the probability measure for $(\tau_1,\dots,\tau_N)$, 
\[
\int_{\mbf{b}}\left\{\prod_{l=1}^N\mbf{P}(\tau_{l-1}, \tau_l;\mbf{b},\mbf{\beta}^*, \widehat{\mbf{\Omega}})^{(S_{l-1}, \,S_l)}-\prod_{l=1}^N\mbf{P}(\tau_{l-1}, \tau_l;\mbf{b},\mbf{\beta}^*, \mbf{\Omega}^*)^{(S_{l-1}, \,S_l)}\right\}\phi(\mbf{b};\mbf{\Sigma}(\mbf{\gamma}^*))d\mbf{b}\rightarrow0.
\]
By the dominated convergence theorem and the fact that $L(\mbf{\theta}, \mbf{\Omega})+L(\mbf{\theta}_0, \mbf{\Omega}_0)$ is bounded away from zero for any $\mbf{\theta}\in\Theta$ and $\mbf{\Omega}\in\mathcal{L}_w$,  
\begin{align*}
& \left|\mathbb{P}m(\widehat{\mbf{\theta}}, \widehat{\mbf{\Omega}})-\mathbb{P}m(\mbf{\theta}^*, \mbf{\Omega}^*)\right| \\
\le{} & \left|\mathbb{P}m(\widehat{\mbf{\theta}}, \widehat{\mbf{\Omega}})-\mathbb{P}m(\mbf{\theta}^*, \widehat{\mbf{\Omega}})\right|+ \left|\mathbb{P}m(\mbf{\theta}^*, \widehat{\mbf{\Omega}})-\mathbb{P}m(\mbf{\theta}^*, \mbf{\Omega}^*)\right| \\
={} & O(\|\widehat{\mbf{\theta}}-\mbf{\theta}^*\|)+\mathbb{P}\log\frac{\int_{\mbf{b}}\left\{\prod_{l=1}^N\mbf{P}(\tau_{l-1}, \tau_l;\mbf{b}, \mbf{\beta}^*, \widehat{\mbf{\Omega}})^{(S_{l-1}, \,S_l)}\right\}\phi(\mbf{b};\mbf{\Sigma}(\mbf{\gamma}^*))d\mbf{b}+L(\mbf{\theta}_0, \mbf{\Omega}_0)}{\int_{\mbf{b}}\left\{\prod_{l=1}^N\mbf{P}(\tau_{l-1}, \tau_l;\mbf{b},\mbf{\beta}^*, \mbf{\Omega}^*)^{(S_{l-1}, \,S_l)}\right\}\phi(\mbf{b};\mbf{\Sigma}(\mbf{\gamma}^*))d\mbf{b}+L(\mbf{\theta}_0, \mbf{\Omega}_0)} \\
\rightarrow{} & 0.
\end{align*}
Thus, $\mathbb{P}m(\widehat{\mbf{\theta}}, \widehat{\mbf{\Omega}})\rightarrow\mathbb{P}m(\mbf{\theta}^*, \mbf{\Omega}^*)$ almost surely. Now we can take the limits on both sides of \eqref{eq1} and finally obtain $\mathbb{P}m(\mbf{\theta}^*, \mbf{\Omega}^*)\ge\mathbb{P}m(\mbf{\theta}_0, \mbf{\Omega}_0)$. By the properties of the Kullback-Leibler information, $L(\mbf{\theta}^*, \mbf{\Omega}^*)=L(\mbf{\theta}_0, \mbf{\Omega}_0)$ with probability one. Therefore,
\[
\begin{split}
\int_{\mbf{b}}\left\{\prod_{l=1}^N\mbf{P}(\tau_{l-1}, \tau_l;\mbf{b},\mbf{\beta}^*, \mbf{\Omega}^*)^{(S_{l-1}, \,S_l)}\right\}\phi(\mbf{b};\mbf{\Sigma}(\mbf{\gamma}^*))d\mbf{b} \\
=\int_{\mbf{b}}\left\{\prod_{l=1}^N\mbf{P}(\tau_{l-1}, \tau_l;\mbf{b},\mbf{\beta}_0, \mbf{\Omega}_0)^{(S_{l-1}, \,S_l)}\right\}\phi(\mbf{b};\mbf{\Sigma}(\mbf{\gamma}_0))d\mbf{b}.
\end{split}
\]
For any fixed sequence of monitoring times $0=\tau_0<\tau_1<\cdots<\tau_N\le\tau$, and any feasible start and end states $(S_0,\,S_N)$, we let $(S_1,S_2,\dots,S_{N-1})$ go over all possible combinations. Then the summation of the resulting equations yields
\[
\begin{split}
\int_{\mbf{b}}\left\{\sum_{(S_1,\dots,S_{N-1})}\prod_{l=1}^N\mbf{P}(\tau_{l-1}, \tau_l;\mbf{b},\mbf{\beta}^*, \mbf{\Omega}^*)^{(S_{l-1}, \,S_l)}\right\}\phi(\mbf{b};\mbf{\Sigma}(\mbf{\gamma}^*))d\mbf{b} \\
=\int_{\mbf{b}}\left\{\sum_{(S_1,\dots,S_{N-1})}\prod_{l=1}^N\mbf{P}(\tau_{l-1}, \tau_l;\mbf{b},\mbf{\beta}_0, \mbf{\Omega}_0)^{(S_{l-1}, \,S_l)}\right\}\phi(\mbf{b};\mbf{\Sigma}(\mbf{\gamma}_0))d\mbf{b},
\end{split}
\]
which implies 
\[
\int_{\mbf{b}}\mbf{P}(0, \tau_N;\mbf{b},\mbf{\beta}^*, \mbf{\Omega}^*)^{(S_0, \,S_N)}\phi(\mbf{b};\mbf{\Sigma}(\mbf{\gamma}^*))d\mbf{b}=\int_{\mbf{b}}\mbf{P}(0, \tau_N;\mbf{b},\mbf{\beta}_0, \mbf{\Omega}_0)^{(S_0, \,S_N)}\phi(\mbf{b};\mbf{\Sigma}(\mbf{\gamma}_0))d\mbf{b}.
\]
The above equation holds for any $\tau_N\in[0,\tau]$ and any feasible $(S_0, S_N)$, which covers the whole set $\mathcal{D}$ under Condition~\ref{S0}. Thus, for any $t\in[0,\tau]$,
\[
\int_{\mbf{b}}\mbf{P}(0, t;\mbf{b},\mbf{\beta}^*, \mbf{\Omega}^*)\phi(\mbf{b};\mbf{\Sigma}(\mbf{\gamma}^*))d\mbf{b}=\int_{\mbf{b}}\mbf{P}(0, t;\mbf{b},\mbf{\beta}_0, \mbf{\Omega}_0)\phi(\mbf{b};\mbf{\Sigma}(\mbf{\gamma}_0))d\mbf{b},
\]
with probability one. By the identifiability in Condition 5, $\mbf{\beta}^*=\mbf{\beta}_0$, $\mbf{\gamma}^*=\mbf{\gamma}_0$, and $\mbf{\Omega}^*(t)=\mbf{\Omega}_0(t)$ for $t\in[0,\tau]$. The continuity of $\mbf{\Omega}_0(t)$ further implies $\|\widehat{\mbf{\theta}}-\mbf{\theta}_0\|+\sum_{(j,k)\in\mathcal{D}}\|\widehat{\Lambda}_{jk}-\Lambda_{0jk}\|_{\infty}\rightarrow 0$ almost surely. 
\end{proof}

\begin{proof}[Proof of Theorem~\ref{thm2}]
For any $l\in\{1,\dots,N\}$, $(j,k)\in\mathcal{D}$, and $t\in[0,\tau]$, let 
\[
B_{ljk}(t;\mbf{\theta},\mbf{\Omega})=
\begin{multlined}[t]
\frac{1}{L(\mbf{\theta},\mbf{\Omega})}\int_{\mbf{b}}\left\{\prod_{l'\ne l}\mbf{P}(\tau_{l'-1}, \tau_{l'};\mbf{b},\mbf{\beta}, \mbf{\Omega})^{(S_{l'-1}, \,S_{l'})}\right\} \\
\times \mbf{P}(\tau_{l-1}, t;\mbf{b},\mbf{\beta},\mbf{\Omega})^{(S_{l-1},\,j)}\exp\left\{\mbf{\beta}_{jk}^{\trans}\mbf{X}(t)+\mbf{b}^{\trans}\mbf{Z}(t)\right\} \\
\times \left\{\mbf{P}(t,\tau_{l};\mbf{b},\mbf{\beta},\mbf{\Omega})^{(k,\,S_{l})}-\mbf{P}(t,\tau_{l};\mbf{b},\mbf{\beta},\mbf{\Omega})^{(j,\,S_{l})}\right\} \\
\times I(\tau_{l-1}<t\le\tau_l)\phi(\mbf{b};\mbf{\Sigma}(\mbf{\gamma}))d\mbf{b}. 
\end{multlined}
\]
The score function for $\mbf{\theta}$ is 
\[
\mbf{\ell}_{\mbf{\theta}}(\mbf{\theta},\mbf{\Omega})=
\begin{bmatrix}
\mbf{\ell}_{\mbf{\beta}}(\mbf{\theta},\mbf{\Omega}) \\
\mbf{\ell}_{\mbf{\gamma}}(\mbf{\theta},\mbf{\Omega}) \\
\end{bmatrix},
\]
where $\mbf{\ell}_{\mbf{\beta}}(\mbf{\theta},\mbf{\Omega}) = \{\mbf{\ell}_{\mbf{\beta}_{jk}}(\mbf{\theta},\mbf{\Omega})\}_{(j,k)\in\mathcal{D}}$,
\begin{gather*}
\mbf{\ell}_{\mbf{\beta}_{jk}}(\mbf{\theta},\mbf{\Omega})=\sum_{l=1}^N\int_0^\tau B_{ljk}(t;\mbf{\theta},\mbf{\Omega})\mbf{X}(t)d\Lambda_{jk}(t), \\
\mbf{\ell}_{\mbf{\gamma}}(\mbf{\theta},\mbf{\Omega})=\frac{1}{L(\mbf{\theta},\mbf{\Omega})}\int_{\mbf{b}} \left\{\prod_{l=1}^N\mbf{P}(\tau_{l-1}, \tau_l;\mbf{b},\mbf{\beta}, \mbf{\Omega})^{(S_{l-1}, \,S_l)}\right\}\phi^{\prime}_{\mbf{\gamma}}(\mbf{b};\mbf{\Sigma}(\mbf{\gamma}))d\mbf{b}.
\end{gather*}
To obtain the score operator for $\mbf{\Omega}$, we consider a one-dimensional submodel $d\mbf{\Omega}_{\epsilon, \mbf{h}} = \{d\Lambda_{jk,\epsilon, h_{jk}}\}_{(j,k)\in\mathcal{D}}$ with each component defined by $d\Lambda_{jk,\epsilon, h_{jk}}(t)=(1+\epsilon h_{jk}(t))d\Lambda_{jk}(t)$, where $\mbf{h}=\{h_{jk}\}_{(j,k)\in\mathcal{D}}\in \mathcal{H} = \prod_{(j,k)\in\mathcal{D}}L_2(\mu_{jk})$ with $\mu_{jk}$ being the measure generated by $\Lambda_{jk}$. Under the true values $\Lambda_{0jk}$ ($(j,k)\in\mathcal{D}$), $\mathcal{H}$ is equivalent to the space $L_2([0,\tau])^{|\mathcal{D}|}$, since $\Lambda_{0jk}$ is continuously differentiable in $[0,\tau]$. The score function for $\mbf{\Omega}$ along this submodel is 
\[
\ell_{\mbf{\Omega}}(\mbf{\theta},\mbf{\Omega})(\mbf{h}) = \frac{\partial}{\partial \epsilon}\ell(\mbf{\theta},\mbf{\Omega}_{\epsilon, \mbf{h}})\bigg\vert_{\epsilon=0}=\sum_{l=1}^N\sum_{(j,k)\in\mathcal{D}}\int_0^\tau B_{ljk}(t;\mbf{\theta},\mbf{\Omega})h_{jk}(t)d\Lambda_{jk}(t).
\]
By the definition of the parameter estimators, $\mathbb{P}_n\{\mbf{\ell}_{\mbf{\theta}}(\widehat{\mbf{\theta}},\widehat{\mbf{\Omega}})\}=\mbf{0}$ and $\mathbb{P}_n\{\ell_{\mbf{\Omega}}(\widehat{\mbf{\theta}},\widehat{\mbf{\Omega}})(\mbf{h})\}=0$. 
In addition, $\mathbb{P}\{\mbf{\ell}_{\mbf{\theta}}(\mbf{\theta}_0,\mbf{\Omega}_0)\}=\mbf{0}$ and $\mathbb{P}\{\ell_{\mbf{\Omega}}(\mbf{\theta}_0,\mbf{\Omega}_0)(\mbf{h})\}=0$. Hence,
\begin{gather*}
\mathbb{G}_n\{\mbf{\ell}_{\mbf{\theta}}(\widehat{\mbf{\theta}},\widehat{\mbf{\Omega}})\}=-n^{1/2}\left[\mathbb{P}\{\mbf{\ell}_{\mbf{\theta}}(\widehat{\mbf{\theta}},\widehat{\mbf{\Omega}})\}-\mathbb{P}\{\mbf{\ell}_{\mbf{\theta}}(\mbf{\theta}_0,\mbf{\Omega}_0)\}\right], \\
\mathbb{G}_n\{\ell_{\mbf{\Omega}}(\widehat{\mbf{\theta}},\widehat{\mbf{\Omega}})(\mbf{h})\}=-n^{1/2}\left[\mathbb{P}\{\ell_{\mbf{\Omega}}(\widehat{\mbf{\theta}},\widehat{\mbf{\Omega}})(\mbf{h})\}-\mathbb{P}\{\ell_{\mbf{\Omega}}(\mbf{\theta}_0,\mbf{\Omega}_0)(\mbf{h})\}\right].
\end{gather*}
We apply Taylor expansion at $(\mbf{\theta}_0,\mbf{\Omega}_0)$ to the right-hand sides of the above two equations. By Lemma S.2, the second-order terms are bounded by
\begin{align*}
& n^{1/2}\Biggl\{O(1)E\biggl[\sum_{(j,k)\in\mathcal{D}}\sum_{l=1}^N\left\{\widehat{\Lambda}_{jk}(\tau_l)-\Lambda_{0jk}(\tau_l)\right\}^2\biggr]+O(1)\|\widehat{\mbf{\beta}}-\mbf{\beta}_0\|^2+O(1)\|\widehat{\mbf{\gamma}}-\mbf{\gamma}_0\|^2\Biggr\} \\
\le{} & n^{1/2}\left\{O_p\left(n^{-2/3}\right)+O_p\left(\|\widehat{\mbf{\beta}}-\mbf{\beta}_0\|^2+\|\widehat{\mbf{\gamma}}-\mbf{\gamma}_0\|^2\right)\right\} \\
={} & O_p\left(n^{1/2}\|\widehat{\mbf{\beta}}-\mbf{\beta}_0\|^2+n^{1/2}\|\widehat{\mbf{\gamma}}-\mbf{\gamma}_0\|^2+n^{-1/6}\right).
\end{align*}
Therefore, 
\begin{gather}
\label{g1}
\mathbb{G}_n\{\mbf{\ell}_{\mbf{\theta}}(\widehat{\mbf{\theta}},\widehat{\mbf{\Omega}})\}=
\begin{multlined}[t]
-n^{1/2}\mathbb{P}\{\mbf{\ell}_{\mbf{\theta}\mbf{\theta}}(\widehat{\mbf{\theta}}-\mbf{\theta}_0)+\mbf{\ell}_{\mbf{\theta}\mbf{\Omega}}(\widehat{\mbf{\Omega}}-\mbf{\Omega}_0)\} \\
+O_p\left(n^{1/2}\|\widehat{\mbf{\beta}}-\mbf{\beta}_0\|^2+n^{1/2}\|\widehat{\mbf{\gamma}}-\mbf{\gamma}_0\|^2+n^{-1/6}\right), 
\end{multlined} \\ 
\label{g2}
\mathbb{G}_n\{\ell_{\mbf{\Omega}}(\widehat{\mbf{\theta}},\widehat{\mbf{\Omega}})(\mbf{h})\}=
\begin{multlined}[t]
-n^{1/2}\mathbb{P}\{\mbf{\ell}_{\mbf{\Omega}\mbf{\theta}}(\mbf{h})^{\trans}(\widehat{\mbf{\theta}}-\mbf{\theta}_0)+\ell_{\mbf{\Omega}\mbf{\Omega}}(\mbf{h},\widehat{\mbf{\Omega}}-\mbf{\Omega}_0)\} \\
+O_p\left(n^{1/2}\|\widehat{\mbf{\beta}}-\mbf{\beta}_0\|^2+n^{1/2}\|\widehat{\mbf{\gamma}}-\mbf{\gamma}_0\|^2+n^{-1/6}\right),
\end{multlined} 
\end{gather}
where $\mbf{\ell}_{\mbf{\theta}\mbf{\theta}}$ is the second derivative of $\ell(\mbf{\theta},\mbf{\Omega})$ with respect to $\mbf{\theta}$, $\mbf{\ell}_{\mbf{\theta}\mbf{\Omega}}(\mbf{h})$ is the derivative of $\mbf{\ell}_{\mbf{\theta}}$ along the submodel $d\mbf{\Omega}_{\epsilon, \mbf{h}}$, $\mbf{\ell}_{\mbf{\Omega}\mbf{\theta}}(\mbf{h})$ is the derivative of $\ell_{\mbf{\Omega}}(\mbf{h})$ with respect to $\mbf{\theta}$, and $\ell_{\mbf{\Omega}\mbf{\Omega}}(\mbf{h},\widehat{\mbf{\Omega}}-\mbf{\Omega}_0)$ is the derivative of $\ell_{\mbf{\Omega}}(\mbf{h})$ along the submodel $d\mbf{\Omega}_0+\epsilon d(\widehat{\mbf{\Omega}}-\mbf{\Omega}_0) = \{d\Lambda_{0jk}+\epsilon d(\widehat{\Lambda}_{jk}-\Lambda_{0jk})\}_{(j,k)\in\mathcal{D}}$. All the derivatives are evaluated at $(\mbf{\theta}_0,\mbf{\Omega}_0)$.

Let $\ell_{\mbf{\Omega}}^*:\,L_2(\mathbb{P})\rightarrow \mathcal{H}$ be the adjoint operator of $\ell_{\mbf{\Omega}}$. 
We define ${\mbf{h}}^*$ to be the least favorable direction such that 
$\ell_{\mbf{\Omega}}^*\ell_{\mbf{\Omega}}({\mbf{h}}^*)=\ell_{\mbf{\Omega}}^*\mbf{\ell}_{\mbf{\theta}}$. 
Lemma S.3 establishes the existence of $\mbf{h}^*$. Note that ${\mbf{h}}^*$ is a $(|\mathcal{D}|\times d_1+d_3)$-dimensional vector of functions in $\mathcal{H}$. Thus,
\begin{align*}
E\{\ell_{\mbf{\Omega}\mbf{\Omega}}({\mbf{h}}^*,\widehat{\mbf{\Omega}}-\mbf{\Omega}_0)\}={} & -E\left\{\ell_{\mbf{\Omega}}({\mbf{h}}^*)\ell_{\mbf{\Omega}}(\widehat{\mbf{\Omega}}-\mbf{\Omega}_0)\right\} \\
={} & -\int \ell_{\mbf{\Omega}}^*\ell_{\mbf{\Omega}}({\mbf{h}}^*)(d\widehat{\mbf{\Omega}}-d\mbf{\Omega}_0)=-\int \ell_{\mbf{\Omega}}^*\mbf{\ell}_{\mbf{\theta}}(d\widehat{\mbf{\Omega}}-d\mbf{\Omega}_0) \\
={} & -E\left\{\mbf{\ell}_{\mbf{\theta}}\ell_{\mbf{\Omega}}(\widehat{\mbf{\Omega}}-\mbf{\Omega}_0)\right\}=E\left\{\mbf{\ell}_{\mbf{\theta}\mbf{\Omega}}(\widehat{\mbf{\Omega}}-\mbf{\Omega}_0)\right\},
\end{align*}
so the difference between \eqref{g1} and \eqref{g2} yields 
\[
\mathbb{G}_n\left\{\mbf{\ell}_{\mbf{\theta}}(\widehat{\mbf{\theta}},\widehat{\mbf{\Omega}})-\ell_{\mbf{\Omega}}(\widehat{\mbf{\theta}},\widehat{\mbf{\Omega}})({\mbf{h}}^*)\right\}=
\begin{multlined}[t]
n^{1/2}E\left[\left\{\mbf{\ell}_{\mbf{\theta}}-\ell_{\mbf{\Omega}}({\mbf{h}}^*)\right\}^{\otimes2}\right](\widehat{\mbf{\theta}}-\mbf{\theta}_0) \\
+O_p\left(n^{1/2}\|\widehat{\mbf{\beta}}-\mbf{\beta}_0\|^2+n^{1/2}\|\widehat{\mbf{\gamma}}-\mbf{\gamma}_0\|^2+n^{-1/6}\right). 
\end{multlined}
\]
By Lemma S.3, the above equation entails that $n^{1/2}(\widehat{\mbf{\theta}}-\mbf{\theta}_0)=O_p(1)$ and yields
\[
n^{1/2}(\widehat{\mbf{\theta}}-\mbf{\theta}_0)=\left(E\left[\left\{\mbf{\ell}_{\mbf{\theta}}-\ell_{\mbf{\Omega}}({\mbf{h}}^*)\right\}^{\otimes2}\right]\right)^{-1}\mathbb{G}_n\{\mbf{\ell}_{\mbf{\theta}}-\ell_{\mbf{\Omega}}({\mbf{h}}^*)\}+o_p(1).
\]
This implies that the influence function for $\widehat{\mbf{\theta}}$ is exactly the efficient influence function, such that $n^{1/2}(\widehat{\mbf{\theta}}-\mbf{\theta}_0)$ converges weakly to a zero-mean multivariate normal vector whose covariance matrix attains the semiparametric efficiency bound \citep{bickel1993efficient}.
\end{proof}

\bibliography{Bibliography-MM-MC}
\end{document}


\bibliographystyle{agsm}

\def\spacingset#1{\renewcommand{\baselinestretch}%
{#1}\small\normalsize} \spacingset{1}


\if1\blind
{
  \title{\bf Supplementary Materials for ``Maximum Likelihood Estimation for Semiparametric Regression Models with Multi-State Interval-Censored Data''}
  \author{Yu Gu, Donglin Zeng, Gerardo Heiss, and D. Y. Lin}
  \date{\vspace{-5ex}}
  \maketitle
} \fi

\if0\blind
{
  \bigskip
  \bigskip
  \bigskip
  \begin{center}
    {\Large\bf Supplementary Materials for ``Maximum Likelihood Estimation for Semiparametric Regression Models with Multi-State Interval-Censored Data''}
\end{center}
  \medskip
} \fi

\bigskip
\spacingset{1.9} 

\section{Some Useful Lemmas}

\begin{lemma} \label{lemma: Donsker}
Under Conditions 1-5, the class $\mathcal{M}$ is Donsker.
\end{lemma}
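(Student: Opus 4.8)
The plan is to establish the Donsker property by decomposing $\mathcal{M}$ into a finite collection of elementary function classes, showing each is Donsker, and then reassembling $\mathcal{M}$ from these pieces through the standard permanence properties of Donsker classes, namely that finite sums, products of uniformly bounded members, and Lipschitz transformations of Donsker classes remain Donsker (as in van der Vaart and Wellner, 1996, Section~2.10). Since $\mathcal{M}$ consists of the (log-)likelihood-type functions indexed by the finite-dimensional regression coefficient $\beta$ and the infinite-dimensional cumulative transition intensities $\Lambda = \{\Lambda_{hj}\}$, the aim is to control the parametric and nonparametric components separately before recombining them.

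For the parametric component, Conditions~1--5 confine $\beta$ to a bounded subset of Euclidean space and the covariates to a bounded set, so the class of linear predictors $\{x \mapsto x^{\trans}\beta\}$ is a finite-dimensional vector space of uniformly bounded functions and is therefore Donsker. For the nonparametric component, each $\Lambda_{hj}$ is monotone nondecreasing and, by the boundedness imposed in the conditions, uniformly bounded on the observation interval; the class of uniformly bounded monotone functions has bracketing entropy $\log N_{[]}(\epsilon, \cdot, L_2(P)) \lesssim \epsilon^{-1}$, so the bracketing integral $\int_0^1 \sqrt{\log N_{[]}(\epsilon, \cdot, L_2(P))}\, d\epsilon$ is finite and each such class is Donsker.

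With the building blocks in hand, the remaining step is to express every element of $\mathcal{M}$ as a fixed, finite composition of these blocks through smooth maps. The transition probabilities driving the multi-state likelihood are finite sums of products of terms of the form $e^{-\Lambda_{hj}(t)}$ and increments $\Lambda_{hj}(s) - \Lambda_{hj}(t)$ (equivalently, finite product integrals of the intensity matrix), so they belong to a Donsker class by the product and sum permanence results. Each likelihood contribution is a finite sum of such transition probabilities and, under the conditions, is bounded above and bounded away from zero uniformly over the parameter set; consequently $u \mapsto \log u$ is Lipschitz on the relevant range, and applying it preserves the Donsker property. Combining these observations gives that $\mathcal{M}$ is Donsker.

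The main obstacle I anticipate is not any single entropy bound but the bookkeeping needed to verify that every element of $\mathcal{M}$ is genuinely a \emph{finite} composition of the elementary blocks through \emph{Lipschitz} maps with the requisite uniform bounds. Concretely, one must extract from Conditions~1--5 that the transition probabilities remain bounded away from zero so that the logarithm does not blow up, and that the combinatorial path structure of the interval-censored multi-state observations contributes only finitely many terms, so that the finite-operation permanence theorems apply rather than a direct entropy computation for the full likelihood. Making this reduction rigorous, while carefully tracking the uniform constants supplied by the conditions, is where the real work lies.
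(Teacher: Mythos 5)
Your overall strategy---bracketing entropy for the nonparametric components plus permanence properties of the Donsker class---is in the right spirit, and your entropy bound $\log N_{[]}(\epsilon,\cdot,L_2)\lesssim \epsilon^{-1}$ for the monotone parts matches what the paper uses (Lemma 2.2 of van de Geer, 2000). But the pivotal step of your argument, namely that the transition probabilities are ``finite sums of products of terms of the form $e^{-\Lambda_{hj}(t)}$ and increments $\Lambda_{hj}(s)-\Lambda_{hj}(t)$,'' is false for this model, and the obstacle you flag at the end is not mere bookkeeping; it cannot be resolved in the form you propose. With time-varying intensities $\exp\{\mbf{\beta}_{jk}^{\trans}\mbf{X}(t)+\mbf{b}^{\trans}\mbf{Z}(t)\}\,d\Lambda_{jk}(t)$, the transition probability matrix $\mbf{P}(t_1,t_2)$ is the product integral (time-ordered exponential, i.e.\ Peano series) of the cumulative intensity matrix: even in a simple progressive illness--death model, the $(1,3)$ entry involves an integral over the unobserved intermediate transition time, and in general one has an infinite series of iterated integrals $\int\cdots\int d\mbf{A}(u_1)\cdots d\mbf{A}(u_m)$. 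These objects are not finite compositions of your elementary blocks through Lipschitz maps, so the finite-operation permanence theorems (finite sums, bounded products, Lipschitz transforms) never get off the ground. Moreover, the functions in $\mathcal{M}$ also integrate the product of transition probabilities over the random effects $\mbf{b}$ against $\phi(\mbf{b};\mbf{\Sigma}(\mbf{\gamma}))$; your proposal never addresses this unbounded-domain integration or the variance parameter $\mbf{\gamma}$ at all.

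The paper handles exactly this difficulty by a different route: it proves directly that the map $(\mbf{\theta},\mbf{\Omega})\mapsto L(\mbf{\theta},\mbf{\Omega})$ is Lipschitz, with the bound expressed in $\|\mbf{\theta}_1-\mbf{\theta}_2\|$ and $L_1(\mu)$-distances between the $\Lambda_{jk}$'s. The key device is Duhamel's equation, $\mbf{P}_1(t_1,t_2)-\mbf{P}_2(t_1,t_2)=\int_{t_1}^{t_2}\mbf{P}_1(t_1,t)\left[d\mbf{A}_1(t)-d\mbf{A}_2(t)\right]\mbf{P}_2(t,t_2)$, combined with integration by parts; this is precisely the tool that controls the product-integral structure your finite-decomposition argument cannot reach, and the Gaussian integration over $\mbf{b}$ is carried through these bounds via factors of the form $e^{C_1+C_2\|\mbf{b}\|}$. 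Once the Lipschitz property is established, brackets for the parameter space ($O(\epsilon^{-d})$ Euclidean brackets times $\exp\{O(\epsilon^{-1})\}$ brackets for the bounded-variation class) induce $L_2(\mathbb{P})$-brackets for the class of likelihoods, yielding a finite bracketing integral and hence the Donsker property; only the final step of your proposal---composing with a Lipschitz transformation, using that $L(\mbf{\theta}_0,\mbf{\Omega}_0)$ is bounded away from zero---coincides with the paper's argument. If you wish to rescue your outline, the repair is exactly this substitution: replace the finite-decomposition claim by a parameter-to-function Lipschitz estimate derived from Duhamel's equation, and only then invoke the entropy bounds you already have.
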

\begin{proof}
Let 
\begin{gather*}
M=\sup_{t\in[0,\tau]}\left\{\sup_{\mbf{X},\mbf{\beta}_{jk}}|\mbf{\beta}_{jk}^{\trans}\mbf{X}(t)|+\sup_{\mbf{Z}}|\mbf{Z}(t)|\right\}, \\
M'=\sup_{t\in[0,\tau]}\left\{\sup_{\mbf{X},\mbf{\beta}_{jk}}|\mbf{\beta}_{jk}^{\trans}\mbf{X}'(t)|+\sup_{\mbf{Z}}|\mbf{Z}'(t)|\right\},
\end{gather*}
both of which are finite under Condition 2. Again, we suppress the arguments $\mbf{X}$ and $\mbf{Z}$ in any cumulative transition intensity matrix of the form $\mbf{A}(t; \mbf{X}, \mbf{Z}, \mbf{b}, \mbf{\beta}, \mbf{\Omega})$. For any $\mbf{\theta}_1$, $\mbf{\theta}_2\in\Theta$, $\mbf{\Omega}_1$, $\mbf{\Omega}_2\in\mathcal{L}_w$, and $(t_1,t_2]\subset [0,\tau]$, we denote $\mbf{A}(t;\mbf{b},\mbf{\beta}_i,\mbf{\Omega}_i)$ by $\mbf{A}_i(t;\mbf{b})$, and denote $\mbf{P}(t_1,t_2;\mbf{b},\mbf{\beta}_i,\mbf{\Omega}_i)$ by $\mbf{P}_i(t_1,t_2;\mbf{b})$, $i=1,2$. For any $m\times n$ matrix $\mbf{B}$ with elements $B_{ij}$, we define its norm as $\|\mbf{B}\|_1=\sum_{i=1}^m\sum_{j=1}^n|B_{ij}|$. Since $\mbf{A}(t;\mbf{b},\mbf{\beta},\mbf{\Omega})$ has zero row sums by definition, 
\begin{equation} \label{dif_A}
\begin{aligned}
& \|\mbf{A}_1(t;\mbf{b})-\mbf{A}_2(t;\mbf{b})\|_1 \\
\le{} & 2\sum_{(j,k)\in\mathcal{D}}\left|\mbf{A}_{1}(t;\mbf{b})^{(j,k)}-\mbf{A}_{2}(t;\mbf{b})^{(j,k)}\right| \\
={} & 
\begin{multlined}[t]
2\sum_{(j,k)\in\mathcal{D}}\left|\int_0^t\left[\exp\{\mbf{\beta}_{1jk}^{\trans}\mbf{X}(s)+\mbf{b}^{\trans}\mbf{Z}(s)\}-\exp\{\mbf{\beta}_{2jk}^{\trans}\mbf{X}(s)+\mbf{b}^{\trans}\mbf{Z}(s)\}\right]d\Lambda_{1jk}(s)\right. \\
\left.+\int_0^t \exp\{\mbf{\beta}_{2jk}^{\trans}\mbf{X}(s)+\mbf{b}^{\trans}\mbf{Z}(s)\}d(\Lambda_{1jk}-\Lambda_{2jk})(s)\right|
\end{multlined} \\
\le{} & 
\begin{multlined}[t]
2\sum_{(j,k)\in\mathcal{D}}\left\{e^{M\|\mbf{b}\|}\int_0^t\left|\exp\{\mbf{\beta}_{1jk}^{\trans}\mbf{X}(s)\}-\exp\{\mbf{\beta}_{2jk}^{\trans}\mbf{X}(s)\}\right|d\Lambda_{1jk}(s)\right. \\
+\exp\{\mbf{\beta}_{2jk}^{\trans}\mbf{X}(t)+\mbf{b}^{\trans}\mbf{Z}(t)\}\times|\Lambda_{1jk}(t)-\Lambda_{2jk}(t)| \\
\left.+\int_0^t|\Lambda_{1jk}(s)-\Lambda_{2jk}(s)|\times\left|\frac{d}{ds}\exp\{\mbf{\beta}_{2jk}^{\trans}\mbf{X}(s)+\mbf{b}^{\trans}\mbf{Z}(s)\}\right|ds\right\}
\end{multlined} \\
\le{} & 
\begin{multlined}[t]
2\sum_{(j,k)\in\mathcal{D}}\left\{e^{M\|\mbf{b}\|}\times C_0\|\mbf{\beta}_{1jk}-\mbf{\beta}_{2jk}\|+e^{M+M\|\mbf{b}\|}\times|\Lambda_{1jk}(t)-\Lambda_{2jk}(t)|\right. \\
\left.+e^{M+M\|\mbf{b}\|}(M'+M'\|\mbf{b}\|)\times\|\Lambda_{1jk}-\Lambda_{2jk}\|_{L_1}\right\} 
\end{multlined} \\
\le{} & e^{C_1+C_2\|\mbf{b}\|}\sum_{(j,k)\in\mathcal{D}}\left\{\|\mbf{\beta}_{1jk}-\mbf{\beta}_{2jk}\|+|\Lambda_{1jk}(t)-\Lambda_{2jk}(t)|+\|\Lambda_{1jk}-\Lambda_{2jk}\|_{L_1}\right\},
\end{aligned}
\end{equation}
where $C_0$, $C_1$, and $C_2$ are some positive constants, and $\|\cdot\|_{L_r}$ denotes the $L_r$-norm with respect to the Lebesgue measure in $[0,\tau]$ for any $r\ge 1$.

For any $(j,k)\in\mathcal{D}$, by applying Duhamel's equation in Theorem \Rom{2}.6.2 of \citet{andersen1993statistical} and the integration by parts formula, 
\begin{align*}
& \left|\mbf{P}_1(t_1,t_2;\mbf{b})^{(j,k)}-\mbf{P}_2(t_1,t_2;\mbf{b})^{(j,k)}\right| \\
={} & \left|\left\{\int_{t_1}^{t_2}\mbf{P}_1(t_1,t;\mbf{b})\left[d\mbf{A}_1(t;\mbf{b})-d\mbf{A}_2(t;\mbf{b})\right]\mbf{P}_2(t,t_2;\mbf{b})\right\}^{(j,k)}\right| \\
\le{} & 
\begin{multlined}[t]
\left|\Bigl\{\mbf{P}_1(t_1,t_2;\mbf{b})\left[\mbf{A}_1(t_2;\mbf{b})-\mbf{A}_2(t_2;\mbf{b})\right]-\left[\mbf{A}_1(t_1;\mbf{b})-\mbf{A}_2(t_1;\mbf{b})\right]\mbf{P}_2(t_1,t_2;\mbf{b})\Bigr\}^{(j,k)}\right| \\
+\left|\left\{\int_{t_1}^{t_2}\mbf{P}_1(t_1,dt;\mbf{b})\left[\mbf{A}_1(t;\mbf{b})-\mbf{A}_2(t;\mbf{b})\right]\mbf{P}_2(t,t_2;\mbf{b})\right\}^{(j,k)}\right| \\
+\left|\left\{\int_{t_1}^{t_2}\mbf{P}_1(t_1,t;\mbf{b})\left[\mbf{A}_1(t;\mbf{b})-\mbf{A}_2(t;\mbf{b})\right]\mbf{P}_2(dt,t_2;\mbf{b})\right\}^{(j,k)}\right|,
\end{multlined} 
\end{align*}
where the differentials of the transition probability matrices are calculated by
\begin{align*}
\mbf{P}_1(t_1,dt;\mbf{b})={} & \mbf{P}_1(t_1,t+dt;\mbf{b})-\mbf{P}_1(t_1,t;\mbf{b}) \\
={} & \mbf{P}_1(t_1,t;\mbf{b})\times[\mbf{P}_1(t,t+dt;\mbf{b})-\mbf{I}] \\
={} & \mbf{P}_1(t_1,t;\mbf{b})d\mbf{A}_1(t;\mbf{b}), \\
\mbf{P}_2(dt,t_2;\mbf{b})={} & \mbf{P}_2(t+dt,t_2;\mbf{b})-\mbf{P}_2(t,t_2;\mbf{b}) \\
={} & [\mbf{I}-\mbf{P}_2(t,t+dt;\mbf{b})]\times \mbf{P}_2(t+dt,t_2;\mbf{b}) \\
={} & -d\mbf{A}_2(t;\mbf{b})\mbf{P}_2(t+,t_2;\mbf{b}).
\end{align*}
Since any transition probability matrix has all elements no larger than 1,  
\begin{align*}
& \left|\mbf{P}_1(t_1,t_2;\mbf{b})^{(j,k)}-\mbf{P}_2(t_1,t_2;\mbf{b})^{(j,k)}\right| \\
\le{} & 
\begin{multlined}[t]
\|\mbf{A}_1(t_2;\mbf{b})-\mbf{A}_2(t_2;\mbf{b})\|_1+\|\mbf{A}_1(t_1;\mbf{b})-\mbf{A}_2(t_1;\mbf{b})\|_1 \\
+ \int_{t_1}^{t_2}\|d\mbf{A}_1(t;\mbf{b})\|_1\times\|\mbf{A}_1(t;\mbf{b})-\mbf{A}_2(t;\mbf{b})\|_1 \\
+ \int_{t_1}^{t_2}\|\mbf{A}_1(t;\mbf{b})-\mbf{A}_2(t;\mbf{b})\|_1\times\|d\mbf{A}_2(t;\mbf{b})\|_1.
\end{multlined}
\end{align*}
We use $Q_{ijk}$ to denote the probability measure generated by $\Lambda_{ijk}(t)$, for $i=1, 2$ and $(j,k)\in\mathcal{D}$, and use $\|\cdot\|_{L_r(Q)}$ to denote the $L_r(Q)$-norm $\|f\|_{L_r(Q)}=\left(\int|f|^rdQ\right)^{1/r}$, for any probability measure $Q$ and $r\ge1$. By plugging in the upper bound for $\|\mbf{A}_1(t;\mbf{b})-\mbf{A}_2(t;\mbf{b})\|_1$ in \eqref{dif_A} into the last two integrals in the above inequality, we obtain
\begin{align*}
& \int_{t_1}^{t_2}\|d\mbf{A}_1(t;\mbf{b})\|_1\times\|\mbf{A}_1(t;\mbf{b})-\mbf{A}_2(t;\mbf{b})\|_1 \\
={} & 2\sum_{(j,k)\in\mathcal{D}}\int_{t_1}^{t_2}\|\mbf{A}_1(t;\mbf{b})-\mbf{A}_2(t;\mbf{b})\|_1\times \exp\left\{\mbf{\beta}_{1jk}^{\trans}\mbf{X}(t)+\mbf{b}^{\trans}\mbf{Z}(t)\right\}d\Lambda_{1jk}(t) \\
\le{} & 2e^{M+M\|\mbf{b}\|}\sum_{(j,k)\in\mathcal{D}}\int_{t_1}^{t_2}\|\mbf{A}_1(t;\mbf{b})-\mbf{A}_2(t;\mbf{b})\|_1d\Lambda_{1jk}(t) \\
\le{} & e^{C_3+C_4\|\mbf{b}\|}\sum_{(j,k),\,(l,r)\in\mathcal{D}}\left\{\|\mbf{\beta}_{1lr}-\mbf{\beta}_{2lr}\|+\int_{t_1}^{t_2}|\Lambda_{1lr}(t)-\Lambda_{2lr}(t)|d\Lambda_{1jk}(t)+\|\Lambda_{1lr}-\Lambda_{2lr}\|_{L_1}\right\} \\
\le{} & e^{C_5+C_6\|\mbf{b}\|}\sum_{(j,k),\,(l,r)\in\mathcal{D}}\biggl\{\|\mbf{\beta}_{1lr}-\mbf{\beta}_{2lr}\|+\|\Lambda_{1lr}-\Lambda_{2lr}\|_{L_1(Q_{1jk})}+\|\Lambda_{1lr}-\Lambda_{2lr}\|_{L_1}\biggr\}.
\end{align*}
Likewise,
\begin{align*}
& \int_{t_1}^{t_2}\|\mbf{A}_1(t;\mbf{b})-\mbf{A}_2(t;\mbf{b})\|_1\times\|d\mbf{A}_2(t;\mbf{b})\|_1 \\
\le{} & e^{C_5+C_6\|\mbf{b}\|}\sum_{(j,k),\,(l,r)\in\mathcal{D}}\biggl\{\|\mbf{\beta}_{1lr}-\mbf{\beta}_{2lr}\|+\|\Lambda_{1lr}-\Lambda_{2lr}\|_{L_1(Q_{2jk})}+\|\Lambda_{1lr}-\Lambda_{2lr}\|_{L_1}\biggr\},
\end{align*}
where $C_3$, $C_4$, $C_5$, and $C_6$ are some positive constants. Therefore, there exist some positive constants $C_7$ and $C_8$, such that
\begin{align*}
& \left|\mbf{P}_1(t_1,t_2;\mbf{b})^{(j,k)}-\mbf{P}_2(t_1,t_2;\mbf{b})^{(j,k)}\right| \\
\le{} & 
\begin{multlined}[t]
e^{C_7+C_8\|\mbf{b}\|}\sum_{(l,r)\in\mathcal{D}}\biggl\{\|\mbf{\beta}_{1lr}-\mbf{\beta}_{2lr}\|+|\Lambda_{1lr}(t_1)-\Lambda_{2lr}(t_1)|+|\Lambda_{1lr}(t_2)-\Lambda_{2lr}(t_2)|\biggr. \\
\biggl.+\|\Lambda_{1lr}-\Lambda_{2lr}\|_{L_1}+\sum_{(l',r')\in\mathcal{D}}\Bigl\{\|\Lambda_{1lr}-\Lambda_{2lr}\|_{L_1(Q_{1l'r'})}+\|\Lambda_{1lr}-\Lambda_{2lr}\|_{L_1(Q_{2l'r'})}\Bigr\}\biggr\}.
\end{multlined}
\end{align*}
It then follows from the mean-value theorem that
\begin{align*}
& \left|\prod_{l=1}^N\mbf{P}_1(\tau_{l-1},\tau_l;\mbf{b})^{(S_{l-1},S_l)}-\prod_{l=1}^N\mbf{P}_2(\tau_{l-1},\tau_l;\mbf{b})^{(S_{l-1},S_l)}\right| \\
\le{} & \sum_{l=1}^N\left|\mbf{P}_1(\tau_{l-1},\tau_l;\mbf{b})^{(S_{l-1},S_l)}-\mbf{P}_2(\tau_{l-1},\tau_l;\mbf{b})^{(S_{l-1},S_l)}\right| \\
\le{} & e^{C_9+C_{10}\|\mbf{b}\|}\sum_{(j,k)\in\mathcal{D}}\left\{\|\mbf{\beta}_{1jk}-\mbf{\beta}_{2jk}\|+\|\Lambda_{1jk}-\Lambda_{2jk}\|_{L_1(\mu_1)}\right\}
\end{align*}
for some positive constants $C_9$ and $C_{10}$ and some finite positive measure $\mu_1$.
Thus, the difference between two likelihood functions is
\begin{align*}
& |L(\mbf{\theta}_1, \mbf{\Omega}_1)-L(\mbf{\theta}_2, \mbf{\Omega}_2)| \\
\le{} & 
\begin{multlined}[t]
\int_{\mbf{b}}\left|\prod_{l=1}^N\mbf{P}_1(\tau_{l-1},\tau_l;\mbf{b})^{(S_{l-1},S_l)}-\prod_{l=1}^N\mbf{P}_2(\tau_{l-1},\tau_l;\mbf{b})^{(S_{l-1},S_l)}\right|\phi(\mbf{b};\mbf{\Sigma}(\mbf{\gamma}_1))d\mbf{b} \\
+ \int_{\mbf{b}}\prod_{l=1}^N\mbf{P}_2(\tau_{l-1},\tau_l;\mbf{b})^{(S_{l-1},S_l)}\times \left|\phi(\mbf{b};\mbf{\Sigma}(\mbf{\gamma}_1))-\phi(\mbf{b};\mbf{\Sigma}(\mbf{\gamma}_2))\right|d\mbf{b}
\end{multlined} \\
\le{} & O(1)\biggl[\sum_{(j,k)\in\mathcal{D}}\Bigl\{\|\mbf{\beta}_{1jk}-\mbf{\beta}_{2jk}\|+\|\Lambda_{1jk}-\Lambda_{2jk}\|_{L_1(\mu_1)}\Bigr\}+\|\mbf{\gamma}_1-\mbf{\gamma}_2\|\biggr] \\
\le{} & O(1)\biggl\{\|\mbf{\theta}_1-\mbf{\theta}_2\|+\sum_{(j,k)\in\mathcal{D}}\|\Lambda_{1jk}-\Lambda_{2jk}\|_{L_1(\mu_1)}\biggr\},
\end{align*}
so
\[
\|L(\mbf{\theta}_1, \mbf{\Omega}_1)-L(\mbf{\theta}_2, \mbf{\Omega}_2)\|^2_{L_2(\mathbb{P})}\le O(1)\biggl\{\|\mbf{\theta}_1-\mbf{\theta}_2\|^2+\sum_{(j,k)\in\mathcal{D}}\|\Lambda_{1jk}-\Lambda_{2jk}\|^2_{L_2(\mu_2)}\biggr\},
\]
for some finite positive measure $\mu_2$.

Let $d$ be the dimension of $\Theta$, which is equal to $|\mathcal{D}|\times d_1+d_3$. There are at most $O(\epsilon^{-d})$ $\epsilon$-brackets covering $\Theta$, with respect to the Euclidean norm. Since each component of $\mathcal{L}_w$ consists of functions with bounded total variations in $[0,\tau]$, 
Lemma 2.2 of \citet{geer2000empirical} entails that the bracketing entropy of $\mathcal{L}_w$ satisfies $\log N_{[]}(\epsilon, \mathcal{L}_w, L_2(\mu_2))\lesssim\epsilon^{-1}$,
where $x\lesssim y$ means that $x\le cy$ for a positive constant $c$. Thus, there are a total of $\exp\{O(\epsilon^{-1})\}\times O(\epsilon^{-d})$ brackets $[\mbf{\theta}_1, \mbf{\theta}_2]\times [\mbf{\Omega}_1, \mbf{\Omega}_2]$ covering $\Theta\times\mathcal{L}_w$, such that 
\[
\|L(\mbf{\theta}_1, \mbf{\Omega}_1)-L(\mbf{\theta}_2, \mbf{\Omega}_2)\|^2_{L_2(\mathbb{P})}<O(\epsilon).
\]
Hence, the class of functions $\left\{L(\mbf{\theta}, \mbf{\Omega}):\,(\mbf{\theta}, \mbf{\Omega})\in\Theta\times\mathcal{L}_w\right\}$ is Donsker. Since $L(\mbf{\theta}_0, \mbf{\Omega}_0)$ is bounded away from zero, 
the preservation of the Donsker property under Lipschitz transformations implies that the class $\mathcal{M}$ is also Donsker.
\end{proof}

\begin{lemma} \label{lemma: rate}
Under Conditions 1-6, 
\[
E\left[\sum_{(j,k)\in\mathcal{D}}\sum_{l=1}^N\left\{\widehat{\Lambda}_{jk}(\tau_l)-\Lambda_{0jk}(\tau_l)\right\}^2\right]=O_p(n^{-2/3})+O\left(\|\widehat{\mbf{\beta}}-\mbf{\beta}_0\|^2+\|\widehat{\mbf{\gamma}}-\mbf{\gamma}_0\|^2\right).
\]
\end{lemma}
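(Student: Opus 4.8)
The plan is to treat this as a standard rate-of-convergence argument for the nonparametric maximum likelihood estimator, using the Hellinger distance as the intrinsic metric and feeding in the entropy bound already obtained in Lemma~\ref{lemma: Donsker}. Write $m_{\mbf{\theta},\mbf{\Omega}}=\log L(\mbf{\theta},\mbf{\Omega})$ and let $(\widehat{\mbf{\theta}},\widehat{\mbf{\Omega}})$ maximize $\mathbb{P}_n m_{\mbf{\theta},\mbf{\Omega}}$. Because the estimator is a maximizer, $\mathbb{P}_n(m_{\widehat{\mbf{\theta}},\widehat{\mbf{\Omega}}}-m_{\mbf{\theta}_0,\mbf{\Omega}_0})\ge 0$, and writing $\mathbb{P}_n=\mathbb{P}+n^{-1/2}\mathbb{G}_n$ yields
\[
-\mathbb{P}\bigl(m_{\widehat{\mbf{\theta}},\widehat{\mbf{\Omega}}}-m_{\mbf{\theta}_0,\mbf{\Omega}_0}\bigr)\le n^{-1/2}\,\mathbb{G}_n\bigl(m_{\widehat{\mbf{\theta}},\widehat{\mbf{\Omega}}}-m_{\mbf{\theta}_0,\mbf{\Omega}_0}\bigr).
\]
First I would invoke the consistency result established prior to this lemma, so that $(\widehat{\mbf{\theta}},\widehat{\mbf{\Omega}})$ eventually lies in a shrinking neighborhood of $(\mbf{\theta}_0,\mbf{\Omega}_0)$ and all expansions below may be carried out locally.

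For the left-hand side, the negative population criterion is the Kullback--Leibler divergence $\mathbb{P}(m_{\mbf{\theta}_0,\mbf{\Omega}_0}-m_{\mbf{\theta},\mbf{\Omega}})$, which is bounded below by $2h^2$, where $h$ denotes the Hellinger distance between the two densities; this supplies the curvature with no identifiability input. For the right-hand side, I would control the modulus of continuity of the empirical process over the localized class $\mathcal{M}_\delta=\{m_{\mbf{\theta},\mbf{\Omega}}-m_{\mbf{\theta}_0,\mbf{\Omega}_0}:h\le\delta\}$. By Lemma~\ref{lemma: Donsker} this class sits inside a Donsker class, and combining the Euclidean covering number $O(\epsilon^{-d})$ with the bracketing-entropy bound $\log N_{[]}(\epsilon,\mathcal{L}_w,L_2(\mu_2))\lesssim\epsilon^{-1}$ gives a bracketing integral $J_{[]}(\delta)\lesssim\delta^{1/2}$, hence, by the maximal inequality (van der Vaart and Wellner, 1996, Lemma~3.4.2),
\[
E^{*}\sup_{\mathcal{M}_\delta}\bigl|\mathbb{G}_n(m_{\mbf{\theta},\mbf{\Omega}}-m_{\mbf{\theta}_0,\mbf{\Omega}_0})\bigr|\lesssim\psi_n(\delta),\qquad \psi_n(\delta)=\delta^{1/2}+n^{-1/2}\delta^{-1}.
\]
Since $\delta\mapsto\psi_n(\delta)/\delta$ is decreasing, solving $r_n^2\,\psi_n(r_n^{-1})\lesssim n^{1/2}$ gives $r_n\asymp n^{1/3}$, and the general rate theorem (van der Vaart and Wellner, 1996, Theorem~3.4.1) then delivers $h(\widehat{\mbf{\theta}},\widehat{\mbf{\Omega}};\mbf{\theta}_0,\mbf{\Omega}_0)=O_p(n^{-1/3})$, so that $h^2=O_p(n^{-2/3})$.

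It remains to convert this Hellinger rate into the stated bound on the cumulative intensities. Taking the expectation of $\sum_{(j,k)}\sum_l\{\widehat{\Lambda}_{jk}(\tau_l)-\Lambda_{0jk}(\tau_l)\}^2$ integrates each squared difference against the law of the observation times, so the left-hand side of the lemma equals $\sum_{(j,k)\in\mathcal{D}}\|\widehat{\Lambda}_{jk}-\Lambda_{0jk}\|_{L_2(\mu)}^2$ for the finite positive measure $\mu$ induced by the observation process, essentially the measure $\mu_2$ from Lemma~\ref{lemma: Donsker}. I would then establish the reverse, local-identifiability inequality
\[
\sum_{(j,k)\in\mathcal{D}}\|\widehat{\Lambda}_{jk}-\Lambda_{0jk}\|_{L_2(\mu)}^2\lesssim h^2\bigl(\widehat{\mbf{\theta}},\widehat{\mbf{\Omega}};\mbf{\theta}_0,\mbf{\Omega}_0\bigr)+\|\widehat{\mbf{\beta}}-\mbf{\beta}_0\|^2+\|\widehat{\mbf{\gamma}}-\mbf{\gamma}_0\|^2,
\]
by expanding $h^2$ (equivalently the log-likelihood difference) to second order about $(\mbf{\theta}_0,\mbf{\Omega}_0)$, separating the Euclidean directions $\mbf{\beta},\mbf{\gamma}$ from the direction of $\mbf{\Omega}$ through a Cauchy--Schwarz splitting of the cross term, and invoking the non-degeneracy of the information operator for $\mbf{\Omega}$ under Condition~6. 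Combining this with $h^2=O_p(n^{-2/3})$ yields the claim.

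The main obstacle is this last, reverse inequality. The likelihood is a mixture over the random effects $\mbf{b}$ of products of transition-probability matrices, each of which is a product integral of the cumulative intensities; showing that a small Hellinger distance forces the $\Lambda_{jk}$ to be close in $L_2(\mu)$, uniformly over the observation grid and after discounting the contribution explainable by moving $\mbf{\beta}$ and $\mbf{\gamma}$, requires a careful local expansion of these product integrals, via the same Duhamel equation used in Lemma~\ref{lemma: Donsker}, together with a lower bound on the associated information operator. This is the only genuinely model-specific step; the curvature and modulus-of-continuity parts are generic once Lemma~\ref{lemma: Donsker} is available.
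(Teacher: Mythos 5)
Your first half tracks the paper's own argument almost exactly: the paper likewise verifies $J(\delta)\le O(\sqrt{\delta})$ from the entropy bound of Lemma~\ref{lemma: Donsker}, obtains the curvature inequality $\mathbb{P}\{m(\mbf{\theta},\mbf{\Omega})-m(\mbf{\theta}_0,\mbf{\Omega}_0)\}\lesssim -H^2$ from Lemma 1.3 of van de Geer (2000), and applies Theorem 3.4.1 of van der Vaart and Wellner (1996) to conclude $H\{(\widehat{\mbf{\theta}},\widehat{\mbf{\Omega}}),(\mbf{\theta}_0,\mbf{\Omega}_0)\}=O_p(n^{-1/3})$. The paper also peels off the $\mbf{\beta}$ and $\mbf{\gamma}$ directions by the mean value theorem and reduces the problem, via Duhamel's equation (Proposition II.8.7 of Andersen et al., 1993), to lower-bounding the $L_2(\mathbb{P})$ norm of the score-type linear functional of $\mbf{g}=\widehat{\mbf{\Omega}}-\mbf{\Omega}_0$ by the quantity appearing in the lemma; this is exactly your ``reverse, local-identifiability inequality.''

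The genuine gap is the step you defer to the end and propose to handle by ``invoking the non-degeneracy of the information operator for $\mbf{\Omega}$ under Condition 6.'' Condition 6 is an identifiability condition: it yields only that the score operator is \emph{injective} on the relevant space of perturbations, i.e., that the quantity the paper calls $\|\mbf{g}\|_2$ (the $L_2(\mathbb{P})$ norm of the score in direction $\mbf{g}$, built from the kernels $R_{ljk}$) is a genuine norm. Between infinite-dimensional spaces, injectivity of a bounded linear operator does \emph{not} imply coercivity --- the inverse can be unbounded --- so there is no way to ``invoke'' the bound $\|\mbf{g}\|_2\gtrsim\|\mbf{g}\|_1$ that your argument needs, where $\|\mbf{g}\|_1=\{E[\sum_{(j,k)\in\mathcal{D}}\sum_{l=1}^N g_{jk}(\tau_l)^2]\}^{1/2}$ is the norm in the statement of the lemma. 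Supplying this coercivity is the actual content of the paper's proof, and it is done by a functional-analytic device absent from your proposal: the paper shows (i) $\|\cdot\|_2$ is a norm on $\mathcal{V}$, using Condition 6 after summing the score identity over all feasible intermediate state sequences and exploiting the arbitrariness of $\tau_N$ and of the feasible pair $(S_0,S_N)$; (ii) $(\mathcal{V},\|\cdot\|_2)$ is \emph{complete}, via a delicate argument that compares subjects whose sets of feasible state pairs differ so as to isolate each component $g_{jk}$ separately; (iii) $\|\mbf{g}\|_2\le C_2^{1/2}\|\mbf{g}\|_1$ by Cauchy--Schwarz; and then (iv) the open mapping theorem applied to the identity map between the two Banach spaces $(\mathcal{V},\|\cdot\|_1)$ and $(\mathcal{V},\|\cdot\|_2)$ yields the reverse inequality $\|\mbf{g}\|_2\ge C_3\|\mbf{g}\|_1$ for free. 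Without this completeness-plus-open-mapping argument (or some substitute quantitative lower bound on the information operator), the conversion from the Hellinger rate $O_p(n^{-1/3})$ to the stated rate for $E[\sum_{(j,k)\in\mathcal{D}}\sum_{l=1}^N\{\widehat{\Lambda}_{jk}(\tau_l)-\Lambda_{0jk}(\tau_l)\}^2]$ does not go through; your proposal stalls precisely at what you yourself call the ``main obstacle.''
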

\begin{proof}
We have shown in Theorem 1 that $\widehat{\mbf{\Omega}}\in\mathcal{L}_w$ and $m(\widehat{\mbf{\theta}},\widehat{\mbf{\Omega}})\in\mathcal{M}$. Define
\[
J(\delta)=\int_0^\delta\sqrt{1+\log N_{[]}(\epsilon, \mathcal{M}, L_2(\mathbb{P}))}\,d\epsilon.
\]
It is easy to show that $J(\delta)\le O(\sqrt{\delta})$. By Lemma 1.3 of \citet{geer2000empirical} and the mean value theorem, 
\[
\mathbb{P}\left\{m(\mbf{\theta},\mbf{\Omega})-m(\mbf{\theta}_0,\mbf{\Omega}_0)\right\}\lesssim -H^2\left\{(\mbf{\theta},\mbf{\Omega}), (\mbf{\theta}_0,\mbf{\Omega}_0)\right\},
\]
where $H\left\{(\mbf{\theta},\mbf{\Omega}), (\mbf{\theta}_0,\mbf{\Omega}_0)\right\}$ is the Hellinger distance
\[
H\left\{(\mbf{\theta},\mbf{\Omega}), (\mbf{\theta}_0,\mbf{\Omega}_0)\right\}=\left\{\int\left[\sqrt{L(\mbf{\theta},\mbf{\Omega})}-\sqrt{L(\mbf{\theta}_0,\mbf{\Omega}_0)}\right]^2d\mu\right\}^{1/2}
\]
with respect to the dominating measure $\mu$.

The above results, together with the fact that $(\widehat{\mbf{\theta}},\widehat{\mbf{\Omega}})$ maximizes $\mathbb{P}_nm(\mbf{\theta},\mbf{\Omega})$ and the consistency result in Theorem 1, imply that all conditions in Theorem 3.4.1 of \citet{vaart1996weak} hold. Therefore, $H\left\{(\widehat{\mbf{\theta}},\widehat{\mbf{\Omega}}), (\mbf{\theta}_0,\mbf{\Omega}_0)\right\}=O_p(r_n^{-1})$, where $r_n$ satisfies $r_n^2J(r_n^{-1})\lesssim\sqrt{n}$. In particular, we choose $r_n$ in the order of $n^{1/3}$ such that $H\left\{(\widehat{\mbf{\theta}},\widehat{\mbf{\Omega}}), (\mbf{\theta}_0,\mbf{\Omega}_0)\right\}=O_p(n^{-1/3})$. By the mean value theorem, 
\begin{align*}
& \begin{multlined}[t]
E\left[\left(\int_{\mbf{b}}\left\{\prod_{l=1}^N\mbf{P}(\tau_{l-1}, \tau_l;\mbf{b},\widehat{\mbf{\beta}}, \widehat{\mbf{\Omega}})^{(S_{l-1}, \,S_l)}\right\}\phi(\mbf{b};\mbf{\Sigma}(\widehat{\mbf{\gamma}}))d\mbf{b}\right.\right. \\
\left.\left.-\int_{\mbf{b}}\left\{\prod_{l=1}^N\mbf{P}(\tau_{l-1}, \tau_l;\mbf{b},\mbf{\beta}_0, \mbf{\Omega}_0)^{(S_{l-1}, \,S_l)}\right\}\phi(\mbf{b};\mbf{\Sigma}(\mbf{\gamma}_0))d\mbf{b}\right)^2\right] 
\end{multlined} \\
={} & O_p(n^{-2/3}).
\end{align*}
Let $\mbf{\Sigma}_0=\mbf{\Sigma}(\mbf{\gamma}_0)$. By the mean value theorem and Proposition \Rom{2}.8.7 in \citet{andersen1993statistical},
\begin{align*}
& O_p(n^{-2/3})+O(1)\|\widehat{\mbf{\beta}}-\mbf{\beta}_0\|^2+O(1)\|\widehat{\mbf{\gamma}}-\mbf{\gamma}_0\|^2 \\
\ge{} & E\left[\left(\int_{\mbf{b}}\left\{\prod_{l=1}^N\mbf{P}(\tau_{l-1}, \tau_l;\mbf{b},\mbf{\beta}_0, \widehat{\mbf{\Omega}})^{(S_{l-1}, \,S_l)}-\prod_{l=1}^N\mbf{P}(\tau_{l-1}, \tau_l;\mbf{b},\mbf{\beta}_0, \mbf{\Omega}_0)^{(S_{l-1}, \,S_l)}\right\}\phi(\mbf{b};\mbf{\Sigma}_0)d\mbf{b}\right)^2\right]  \\
\ge{} & 
\begin{multlined}[t]
C_{1}E\left[\vphantom{\left(\sum_{l=1}\right)^2}\left(\sum_{l=1}^N\int_{\mbf{b}}\left\{\prod_{l'\ne l}\mbf{P}(\tau_{l'-1}, \tau_{l'};\mbf{b},\mbf{\beta}_0, \mbf{\Omega}_0)^{(S_{l'-1}, \,S_{l'})}\right\}\right.\right. \\
\left.\left.\times\left\{\int_{\tau_{l-1}}^{\tau_l}\mbf{P}(\tau_{l-1}, t;\mbf{b},\mbf{\beta}_0,\mbf{\Omega}_0)d\mbf{A}(t;\mbf{b},\mbf{\beta}_0,\widehat{\mbf{\Omega}}-\mbf{\Omega}_0)\mbf{P}(t,\tau_l;\mbf{b},\mbf{\beta}_0,\mbf{\Omega}_0)\right\}^{(S_{l-1},\,S_l)}\phi(\mbf{b};\mbf{\Sigma}_0)d\mbf{b}\right)^2\right]
\end{multlined}
\end{align*}
for some positive constant $C_{1}$. 
Define the metric space 
\[
\mathcal{V} =  \left\{\mbf{g}=\{g_{jk}\}_{(j,k)\in\mathcal{D}}:\; \|g_{jk}\|_{L_2(\mathbb{P})}<\infty,\, g_{jk}(0)=0\right\}
\]
and the norm 
\[
\|\mbf{g}\|_1=\left\{E\left[\sum_{(j,k)\in\mathcal{D}}\sum_{l=1}^Ng_{jk}(\tau_l)^2\right]\right\}^{1/2}.
\]
The space $(\mathcal{V}, \|\cdot\|_1)$ is a Banach space.
In addition, we define the seminorm 
\[
\|\mbf{g}\|_2=\left\{E\left[\left(\sum_{l=1}^N\sum_{(j,k)\in\mathcal{D}}\left\{R_{ljk}(\tau_l)g_{jk}(\tau_l)-\int_0^{\tau_l}g_{jk}(t)dR_{ljk}(t)\right\}\right)^2\right]\right\}^{1/2},
\]
where 
\[
R_{ljk}(t)=
\begin{multlined}[t]
\int_{\mbf{b}}\left\{\prod_{l'\ne l}\mbf{P}(\tau_{l'-1}, \tau_{l'};\mbf{b},\mbf{\beta}_0, \mbf{\Omega}_0)^{(S_{l'-1}, \,S_{l'})}\right\} \\
\times \mbf{P}(\tau_{l-1}, t;\mbf{b},\mbf{\beta}_0,\mbf{\Omega}_0)^{(S_{l-1},\,j)}\exp\left\{\mbf{\beta}_{0jk}^{\trans}\mbf{X}(t)+\mbf{b}^{\trans}\mbf{Z}(t)\right\} \\
\times \left[\mbf{P}(t,\tau_{l};\mbf{b},\mbf{\beta}_0,\mbf{\Omega}_0)^{(k,\,S_{l})}-\mbf{P}(t,\tau_{l};\mbf{b},\mbf{\beta}_0,\mbf{\Omega}_0)^{(j,\,S_{l})}\right] \\
\times I(\tau_{l-1}<t\le\tau_l)\phi(\mbf{b};\mbf{\Sigma}_0)d\mbf{b}
\end{multlined}
\]
for any $l\in\{1,\dots,N\}$, $(j,k)\in\mathcal{D}$, and $t\in[0,\tau]$.
If $\|\mbf{g}\|_2=0$ for some $\mbf{g}\in\mathcal{V}$, then with probability one, 
\[
\sum_{l=1}^N\sum_{(j,k)\in\mathcal{D}}\left\{R_{ljk}(\tau_l)g_{jk}(\tau_l)-\int_0^{\tau_l}g_{jk}(t)dR_{ljk}(t)\right\} = 0, 
\]
which implies that $g_{jk}$ has bounded total variation over $[0,\tau]$. Thus, the above equation can be written as 
\begin{align*}
0={} & \sum_{l=1}^N\sum_{(j,k)\in\mathcal{D}}\int_{\tau_{l-1}}^{\tau_l} R_{ljk}(t)dg_{jk}(t) \\
={} & 
\begin{multlined}[t]
\sum_{l=1}^N\int_{\mbf{b}}\left\{\prod_{l'\ne l}\mbf{P}(\tau_{l'-1}, \tau_{l'};\mbf{b},\mbf{\beta}_0, \mbf{\Omega}_0)^{(S_{l'-1}, \,S_{l'})}\right\} \\
\times\left\{\int_{\tau_{l-1}}^{\tau_l}\mbf{P}(\tau_{l-1}, t;\mbf{b},\mbf{\beta}_0,\mbf{\Omega}_0)d\mbf{A}(t;\mbf{b},\mbf{\beta}_0,\mbf{g})\mbf{P}(t,\tau_l;\mbf{b},\mbf{\beta}_0,\mbf{\Omega}_0)\right\}^{(S_{l-1},\,S_l)}\phi(\mbf{b};\mbf{\Sigma}_0)d\mbf{b} 
\end{multlined} \\
={} & \lim_{\epsilon\rightarrow0}\frac{1}{\epsilon}\int_{\mbf{b}}\left\{\prod_{l=1}^N\mbf{P}(\tau_{l-1}, \tau_l;\mbf{b},\mbf{\beta}_0, \mbf{\Omega}_0+\epsilon \mbf{g})^{(S_{l-1},\,S_l)}-\prod_{l=1}^N\mbf{P}(\tau_{l-1}, \tau_l;\mbf{b},\mbf{\beta}_0, \mbf{\Omega}_0)^{(S_{l-1},\,S_l)}\right\}\phi(\mbf{b};\mbf{\Sigma}_0)d\mbf{b}.
\end{align*}
We evaluate the above equation at all possible $(S_1,S_2,\dots,S_{N-1})$ given the start and end states $(S_0, S_N)$. Taking the sum of the resulting equations yields
\[
\lim_{\epsilon\rightarrow0}\frac{1}{\epsilon}\int_{\mbf{b}}\left\{\mbf{P}(0, \tau_N;\mbf{b},\mbf{\beta}_0, \mbf{\Omega}_0+\epsilon \mbf{g})^{(S_{0},S_N)}-\mbf{P}(0, \tau_N;\mbf{b},\mbf{\beta}_0, \mbf{\Omega}_0)^{(S_{0},S_N)}\right\}\phi(\mbf{b};\mbf{\Sigma}_0)d\mbf{b}=0.
\]
The above equation holds for any arbitrary $\tau_N$ and feasible $(S_0,S_N)$, which covers the whole set $\mathcal{D}$ under Condition 3. Thus, for any $t\in[0,\tau]$,
\begin{align*}
0={} & \lim_{\epsilon\rightarrow0}\frac{1}{\epsilon}\int_{\mbf{b}}\left[\mbf{P}(0, t;\mbf{b},\mbf{\beta}_0, \mbf{\Omega}_0+\epsilon \mbf{g})-\mbf{P}(0, t;\mbf{b},\mbf{\beta}_0, \mbf{\Omega}_0)\right]\phi(\mbf{b};\mbf{\Sigma}_0)d\mbf{b} \\
={} & \int_{\mbf{b}}\int_0^t\mbf{P}(0,s;\mbf{b},\mbf{\beta}_0,\mbf{\Omega}_0)d\mbf{A}(s;\mbf{b},\mbf{\beta}_0,\mbf{g})\mbf{P}(s,t;\mbf{b},\mbf{\beta}_0,\mbf{\Omega}_0)\phi(\mbf{b};\mbf{\Sigma}_0)d\mbf{b}
\end{align*}
By Condition 6, for any $t\in[0,\tau]$ and any $(j,k)\in\mathcal{D}$,
\[
\mbf{A}(t;\mbf{b},\mbf{\beta}_0,\mbf{g})^{(j,k)}=\int_0^t\exp\left\{\mbf{\beta}_{0jk}^{\trans}\mbf{X}(s)+\mbf{b}^{\trans}\mbf{Z}(s)\right\}dg_{jk}(s)=0.
\]
Therefore, $g_{jk}(t)=0$ for any $t\in[0,\tau]$ and $(j,k)\in\mathcal{D}$. Hence, $\|\cdot\|_2$ is a norm in $\mathcal{V}$. 

Next we show that the space $(\mathcal{V}, \|\cdot\|_2)$ is also a Banach space.
We only need to show the completeness of the space.
Consider any arbitrary Cauchy sequence $\{\mbf{g}_n = \{g_{njk}\}_{(j,k)\in\mathcal{D}}\}$ in the space $(\mathcal{V}, \|\cdot\|_2)$. We have 
\[
\|\mbf{g}_n-\mbf{g}_m\|_2^2 = E\left[\left(\sum_{l=1}^N\sum_{(j,k)\in\mathcal{D}}\left\{R_{ljk}(\tau_l)[g_{njk}(\tau_l)-g_{mjk}(\tau_l)]-\int_0^{\tau_l}[g_{njk}(t)-g_{mjk}(t)]dR_{ljk}(t)\right\}\right)^2\right]
\]
converges to zero almost surely. Thus,
\[
\sum_{l=1}^N\sum_{(j,k)\in\mathcal{D}}\left\{R_{ljk}(\tau_l)[g_{njk}(\tau_l)-g_{mjk}(\tau_l)]-\int_0^{\tau_l}[g_{njk}(t)-g_{mjk}(t)]dR_{ljk}(t)\right\}
\]
converges to zero almost surely. 
Suppose that $N=1$. Since $\tau_1$ can take any value within $[0,\tau]$, the above convergence implies that 
\[
\sum_{(j,k)\in\mathcal{D}}\left\{R_{1jk}(t)[g_{njk}(t)-g_{mjk}(t)]-\int_0^{t}[g_{njk}(s)-g_{mjk}(s)]dR_{1jk}(s)\right\}\rightarrow 0
\] 
almost surely for any $t\in [0,\tau]$.
For any given $(j,k)\in\mathcal{D}$, consider two subjects with the same observations except that the state pair $(S_0, j)$ is feasible for one of them and is unfeasible for the other.
Subtracting the above equations for these two subjects yields
\[
\sum_{k': (j,k')\in\mathcal{D}}\left\{R_{1jk'}(t)[g_{njk'}(t)-g_{mjk'}(t)]-\int_0^{t}[g_{njk'}(s)-g_{mjk'}(s)]dR_{1jk'}(s)\right\}\rightarrow 0 
\]   
almost surely. 
Similarly, we consider two subjects with the same observations except that the state pair $(k, S_1)$ is feasible for one of them and is unfeasible for the other. 
Subtracting the above equations for these two subjects yields
\[
R_{1jk}(t)[g_{njk}(t)-g_{mjk}(t)]-\int_0^{t}[g_{njk}(s)-g_{mjk}(s)]dR_{1jk}(s)\rightarrow 0 
\]
almost surely.
Let $G_{njk}(t) = R_{1jk}(t)g_{njk}(t)-\int_0^{t}g_{njk}(s)dR_{1jk}(s)$. 
Then $G_{njk}(t)$ converges to some function $G_{jk}(t)\in\mathcal{V}$.
In addition, we can easily verify that 
\[
G_{njk}(t) = \frac{R_{1jk}^2(t)}{R_{1jk}^{\prime}(t)}\times\left\{\frac{\int_0^{t}g_{njk}(s)dR_{1jk}(s)}{R_{1jk}(t)}\right\}^{\prime}.
\]
Since $R_{1jk}(t)$ and $R_{1jk}^{\prime}(t)$ are both uniformly bounded over $[0,\tau]$ under Condition 2, we conclude that $\mbf{g}_n$ converges in $\mathcal{V}$. Hence, the space $(\mathcal{V}, \|\cdot\|_2)$ is a Banach space.

By the Cauchy–Schwarz inequality, for any $\mbf{g}\in\mathcal{V}$,
\begin{align*}
\|\mbf{g}\|_2^2={} & E\left[\left(\sum_{l=1}^N\sum_{(j,k)\in\mathcal{D}}\left\{R_{ljk}(\tau_l)g_{jk}(\tau_l)-\int_0^{\tau_l}g_{jk}(t)dR_{ljk}(t)\right\}\right)^2\right] \\
\le{} & 2E\left[\left(\sum_{l=1}^N\sum_{(j,k)\in\mathcal{D}}R_{ljk}(\tau_l)g_{jk}(\tau_l)\right)^2\right]+2E\left[\left(\sum_{l=1}^N\sum_{(j,k)\in\mathcal{D}}\int_0^{\tau_l}g_{jk}(t)dR_{ljk}(t)\right)^2\right] \\
\le{} & C_2E\left[\sum_{l=1}^N\sum_{(j,k)\in\mathcal{D}}g_{jk}(\tau_l)^2\right] \\
={} & C_{2}\|\mbf{g}\|_1^2,
\end{align*}
where $C_{2}$ is a positive constant. 
Then it follows from the open mapping theorem in Banach spaces that the norms $\|\cdot\|_1$ and $\|\cdot\|_2$ are equivalent, so $\|\mbf{g}\|_2\ge C_{3} \|\mbf{g}\|_1$ for some constant $C_{3}$. Therefore, 
\[
O_p(n^{-2/3})+O(1)\|\widehat{\mbf{\beta}}-\mbf{\beta}_0\|^2+O(1)\|\widehat{\mbf{\gamma}}-\mbf{\gamma}_0\|^2\ge C_{1}C_{3} E\left[\sum_{(j,k)\in\mathcal{D}}\sum_{l=1}^N\left\{\widehat{\Lambda}_{jk}(\tau_l)-\Lambda_{0jk}(\tau_l)\right\}^2\right].
\]
The desired result thus follows.
\end{proof}

\begin{lemma} \label{hstar}
Under Conditions 1-6, the following are true:
\begin{itemize}
  \item[(\rom{1})] The solution ${\mbf{h}}^*$ to the equation system $\ell_{\mbf{\Omega}}^*\ell_{\mbf{\Omega}}({\mbf{h}}^*)=\ell_{\mbf{\Omega}}^*\mbf{\ell}_{\mbf{\theta}}$ exists;
  \item[(\rom{2})] $\mbf{\ell}_{\mbf{\theta}}(\widehat{\mbf{\theta}},\widehat{\mbf{\Omega}})-\ell_{\mbf{\Omega}}(\widehat{\mbf{\theta}},\widehat{\mbf{\Omega}})({\mbf{h}}^*)$ belongs to a Donsker class and converges in $L_2(\mathbb{P})$ norm to $\mbf{\ell}_{\mbf{\theta}}-\ell_{\mbf{\Omega}}({\mbf{h}}^*)$;
  \item[(\rom{3})] The matrix $E\left[\left\{\mbf{\ell}_{\mbf{\theta}}-\ell_{\mbf{\Omega}}({\mbf{h}}^*)\right\}^{\otimes2}\right]$ is invertible.
\end{itemize}
\end{lemma}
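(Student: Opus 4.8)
The plan is to handle the three claims in sequence, reducing each to machinery already developed in Lemmas \ref{lemma: Donsker} and \ref{lemma: rate}. For part (\rom{1}), I would first write $\ell_{\mbf{\Omega}}(\mbf{h})$ explicitly as the Gateaux derivative of the log-likelihood along the path $\mbf{\Omega}_0\mapsto\mbf{\Omega}_0+\epsilon\mbf{h}$ at $\mbf{\theta}_0$, using the same Duhamel expansion of $\mbf{P}(\tau_{l-1},\tau_l;\mbf{b})$ that appears in Lemma \ref{lemma: rate}, and identify the adjoint $\ell_{\mbf{\Omega}}^*$ with respect to the $L_2(\mathbb{P})$ inner product. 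The resulting information operator $\ell_{\mbf{\Omega}}^*\ell_{\mbf{\Omega}}$ acts on the Banach space $\mathcal{V}$ of Lemma \ref{lemma: rate}. The key structural step is the decomposition $\ell_{\mbf{\Omega}}^*\ell_{\mbf{\Omega}}=\mathcal{A}+\mathcal{K}$, where $\mathcal{A}$ is a multiplication-type operator that is bounded and boundedly invertible (its multiplier is bounded away from zero under Condition 2, paralleling the quantity $R_{1jk}^{\prime}$ in Lemma \ref{lemma: rate}), and $\mathcal{K}$ is an integral operator whose kernel is continuous and square-integrable under Condition 2, hence compact on $\mathcal{V}$. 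Given this decomposition, existence of $\mbf{h}^*$ follows from the Fredholm alternative once I verify injectivity of $\ell_{\mbf{\Omega}}^*\ell_{\mbf{\Omega}}$, i.e.\ that $\ell_{\mbf{\Omega}}(\mbf{h})=0$ in $L_2(\mathbb{P})$ forces $\mbf{h}=0$. This is precisely the implication $\|\mbf{h}\|_2=0\Rightarrow\mbf{h}=0$ already established in the proof of Lemma \ref{lemma: rate} via Conditions 3 and 6, so the homogeneous equation has only the trivial solution and the inhomogeneous equation $\ell_{\mbf{\Omega}}^*\ell_{\mbf{\Omega}}(\mbf{h}^*)=\ell_{\mbf{\Omega}}^*\mbf{\ell}_{\mbf{\theta}}$ is solvable componentwise.

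For part (\rom{2}), I would observe that each of $\mbf{\ell}_{\mbf{\theta}}(\mbf{\theta},\mbf{\Omega})$ and $\ell_{\mbf{\Omega}}(\mbf{\theta},\mbf{\Omega})(\mbf{h}^*)$ differs from $L(\mbf{\theta},\mbf{\Omega})$ only through additional bounded multiplicative factors built from $\mbf{X}$, $\mbf{Z}$, and integrated intensity increments, all uniformly bounded under Condition 2. Repeating the bracketing-entropy estimates of Lemma \ref{lemma: Donsker}, together with the preservation of the Donsker property under Lipschitz maps and under products of uniformly bounded Donsker classes, shows that $\mbf{\ell}_{\mbf{\theta}}-\ell_{\mbf{\Omega}}(\mbf{h}^*)$, indexed by $(\mbf{\theta},\mbf{\Omega})$ in a neighborhood of $(\mbf{\theta}_0,\mbf{\Omega}_0)$, forms a Donsker class. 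The $L_2(\mathbb{P})$ convergence to $\mbf{\ell}_{\mbf{\theta}}-\ell_{\mbf{\Omega}}(\mbf{h}^*)$ evaluated at the truth then follows from Lipschitz continuity of these functionals in $(\mbf{\theta},\mbf{\Omega})$ (in the same norm used in Lemma \ref{lemma: Donsker}), the consistency $\widehat{\mbf{\theta}}\to\mbf{\theta}_0$, $\widehat{\mbf{\Omega}}\to\mbf{\Omega}_0$ of Theorem 1, the rate bound of Lemma \ref{lemma: rate}, and dominated convergence under the uniform bounds of Condition 2.

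For part (\rom{3}), I would prove positive-definiteness of $E\left[\left\{\mbf{\ell}_{\mbf{\theta}}-\ell_{\mbf{\Omega}}(\mbf{h}^*)\right\}^{\otimes2}\right]$ by contradiction. Singularity would force a nonzero vector $\mbf{v}$ with $\mbf{v}^{\trans}\left\{\mbf{\ell}_{\mbf{\theta}}-\ell_{\mbf{\Omega}}(\mbf{h}^*)\right\}=0$ almost surely; since $\ell_{\mbf{\Omega}}$ is linear in its direction, this reads $\mbf{v}^{\trans}\mbf{\ell}_{\mbf{\theta}}=\ell_{\mbf{\Omega}}(\mbf{v}^{\trans}\mbf{h}^*)$, so the parametric score along $\mbf{v}$ coincides with a nuisance score. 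Differentiating the likelihood along the joint path $(\mbf{\theta}_0+\epsilon\mbf{v},\,\mbf{\Omega}_0-\epsilon\mbf{v}^{\trans}\mbf{h}^*)$ then gives an identically vanishing score, making $\epsilon\mapsto L(\mbf{\theta}_0+\epsilon\mbf{v},\,\mbf{\Omega}_0-\epsilon\mbf{v}^{\trans}\mbf{h}^*)$ locally constant, which contradicts local identifiability. I would close this contradiction by the same device as in Lemma \ref{lemma: rate}: summing the vanishing-score identity over all feasible $(S_0,S_N)$ and all $\tau_N$, then invoking Conditions 3 and 6 to force both the intensity perturbation and $\mbf{v}$ to be zero.

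The main obstacle is part (\rom{1}), specifically isolating the decomposition $\ell_{\mbf{\Omega}}^*\ell_{\mbf{\Omega}}=\mathcal{A}+\mathcal{K}$ with $\mathcal{A}$ boundedly invertible and verifying compactness of the remainder $\mathcal{K}$, which requires confirming that its kernel is continuous and square-integrable under Condition 2 so that the Fredholm alternative applies. Once the injectivity already supplied by Lemma \ref{lemma: rate} is fed in, parts (\rom{2}) and (\rom{3}) are comparatively routine consequences of the Donsker, rate, and identifiability arguments already in place.
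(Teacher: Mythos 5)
Your overall architecture---reduce (\rom{1}) to a Fredholm-type question with injectivity supplied by the summing-over-states identifiability device, recycle the bracketing argument of Lemma \ref{lemma: Donsker} for (\rom{2}), and prove (\rom{3}) by contradiction via a vanishing score along a joint submodel---matches the paper, and your parts (\rom{2}) and (\rom{3}) are essentially the paper's arguments. The genuine gap is in part (\rom{1}), and it sits exactly at the point you flag as ``the main obstacle'': the decomposition $\ell_{\mbf{\Omega}}^*\ell_{\mbf{\Omega}}=\mathcal{A}+\mathcal{K}$ you want does not exist. Writing $\mbf{\Gamma}=\ell_{\mbf{\Omega}}^*\ell_{\mbf{\Omega}}$, one computes
\[
\mbf{\Gamma}(\mbf{h})_{jk}(t)=\sum_{(j',k')\in\mathcal{D}}\int_0^\tau E\left[\sum_{l,l'=1}^N B_{ljk}(t)B_{l'j'k'}(s)\right]h_{j'k'}(s)\,d\Lambda_{0j'k'}(s),
\]
which is a pure integral operator with a bounded kernel, hence compact (Hilbert--Schmidt) on the $L_2$-type space $\mathcal{H}$; it has no multiplication component at all. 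If one could write $\mbf{\Gamma}=\mathcal{A}+\mathcal{K}$ with $\mathcal{K}$ compact and $\mathcal{A}$ boundedly invertible, then $\mathcal{A}=\mbf{\Gamma}-\mathcal{K}$ would itself be compact, and a compact operator on an infinite-dimensional space cannot have a bounded inverse (otherwise the identity $\mathcal{A}^{-1}\mathcal{A}$ would be compact). So the program you outline---verify that the kernel is continuous and square-integrable, then invoke the Fredholm alternative---proves compactness of $\mbf{\Gamma}$ and thereby defeats itself: the Fredholm alternative is simply not applicable to $\mbf{\Gamma}(\mbf{h}^*)=\mbf{\Psi}$ as it stands.

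The missing idea is to differentiate the integral equation rather than decompose the operator. Because the examination times are ordered and separated by at least $\eta_2$ (Condition 4), the kernel $E[H_{jkj'k'}(t,s)]$ is smooth on $\{t>s\}$ and on $\{t<s\}$ but its first $t$-derivative jumps across the diagonal $s=t$. Differentiating $\mbf{\Gamma}(\mbf{h}^*)=\mbf{\Psi}$ twice in $t$ therefore produces a genuine multiplication term $r_{1jkj'k'}(t)h^*_{j'k'}(t)$ arising from that jump, plus two integral operators with smooth kernels, which are compact; the matrix $\{r_{1jkj'k'}(t)\}$ equals $-\bigl\{E[\sum_{l}\mbf{B}_l(t)^{\otimes2}\mid\tau_{l-1}=t]+E[\sum_{l}\mbf{B}_l(t)^{\otimes2}\mid\tau_{l}=t]\bigr\}\mbf{G}(t)$, which is negative definite, so the leading term is boundedly invertible. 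It is to this \emph{differentiated} equation that the ``invertible plus compact'' argument (Theorem 4.25 of Rudin, combined with injectivity of $\mbf{\Gamma}$, which your recycling of the Lemma \ref{lemma: rate} argument correctly supplies) applies. This step also matters downstream: the differentiated equation, whose coefficients are continuously differentiable under Conditions 1, 2, and 4, is what shows that $\mbf{h}^*$ is continuously differentiable and hence of bounded variation---a fact your part (\rom{2}) silently needs in order to run the Lemma \ref{lemma: Donsker} bracketing argument on $\ell_{\mbf{\Omega}}(\cdot)(\mbf{h}^*)$, and which nothing in your sketch establishes. (In (\rom{3}), note also that after Conditions 3 and 6 reduce the identity to $\mbf{v}_{1jk}^{\trans}\mbf{X}(t)=\mbf{v}^{\trans}\mbf{h}^*_{jk}(t)$, you still need the nondegeneracy in Condition 2 to conclude $\mbf{v}_{1jk}=\mbf{0}$.)
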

\begin{proof}
Throughout the proof, a function is evaluated at $(\mbf{\theta}_0,\mbf{\Omega}_0)$ if the arguments $(\mbf{\theta},\mbf{\Omega})$ are missing. We first verify (\rom{1}). We equip $\mathcal{H}$ with an inner product defined by
\[
\langle {\mbf{h}}^{(1)}, {\mbf{h}}^{(2)}\rangle=\sum_{(j,k)\in\mathcal{D}}\int_0^\tau h^{(1)}_{jk}(t)h^{(2)}_{jk}(t)d\Lambda_{0jk}(t).
\]
By the definition of the adjoint operator, for any ${\mbf{h}}^{(1)}, {\mbf{h}}^{(2)}\in\mathcal{H}$, 
\begin{align*}
\langle \ell_{\mbf{\Omega}}^*\ell_{\mbf{\Omega}}(\mbf{h}^{(1)}), \mbf{h}^{(2)} \rangle ={} & \mathbb{P}\left\{\ell_{\mbf{\Omega}}({\mbf{h}}^{(1)})\ell_{\mbf{\Omega}}({\mbf{h}}^{(2)})\right\} \\
={} & \sum_{(j,k)\in\mathcal{D}}\int_0^\tau \Gamma_{jk}({\mbf{h}}^{(1)})(t)h^{(2)}_{jk}(t)d\Lambda_{0jk}(t) \\
={} & \langle \mbf{\Gamma}(\mbf{h}^{(1)}), \mbf{h}^{(2)} \rangle ,
\end{align*}
where $\mbf{\Gamma}(\mbf{h}) = \{\Gamma_{jk}(\mbf{h})\}_{(j,k)\in\mathcal{D}}$, and 
\[
\Gamma_{jk}(\mbf{h})(t)=\sum_{(j',k')\in\mathcal{D}}\int_0^\tau E\left[\sum_{l,l'=1}^N B_{ljk}(t)B_{l'j'k'}(s)\right]h_{j'k'}(s)d\Lambda_{0j'k'}(s).
\]
Thus, $\ell_{\mbf{\Omega}}^*\ell_{\mbf{\Omega}}(\mbf{h}) = \mbf{\Gamma}(\mbf{h})$ for $\mbf{h}\in\mathcal{H}$.
Likewise, 
\[
\mathbb{P}\left\{\mbf{\ell}_{\mbf{\theta}}\ell_{\mbf{\Omega}}({\mbf{h}})\right\} = \sum_{(j,k)\in\mathcal{D}}\int_0^\tau E\left[\sum_{l=1}^N B_{ljk}(t)\mbf{\ell}_{\mbf{\theta}}\right]h_{jk}(t)d\Lambda_{0jk}(t)
\]
implies that $\ell_{\mbf{\Omega}}^*\mbf{\ell}_{\mbf{\theta}} = \mbf{\Psi}$, where $\mbf{\Psi}=\{\mbf{\psi}_{jk}\}_{(j,k)\in\mathcal{D}}$ with $\mbf{\psi}_{jk}(t) = E[\sum_{l=1}^N B_{ljk}(t)\mbf{\ell}_{\mbf{\theta}}]$.
Therefore, solving the equation system in (\rom{1}) is equivalent to solving the integral equation system $\mbf{\Gamma}(\mbf{h}^*) = \mbf{\Psi}$. That is, for $(j,k)\in\mathcal{D}$,
\begin{equation} \label{int_eq}
\sum_{(j',k')\in\mathcal{D}}\int_0^\tau E\left[\sum_{l,l'=1}^N B_{ljk}(t)B_{l'j'k'}(s)\right]\mbf{h}^*_{j'k'}(s)d\Lambda_{0j'k'}(s) = \mbf{\psi}_{jk}(t).
\end{equation}
Let $H_{jkj'k'}(t,s) = \sum_{l,l'=1}^N B_{ljk}(t)B_{l'j'k'}(s)$. 
We examine $E[H_{jkj'k'}(t,s)]$ by considering $t\ge s$ and $t<s$. If $t\ge s$,
\begin{align*}
& E[H_{jkj'k'}(t,s)] \\
={} & 
\begin{multlined}[t]
E\Biggl[\sum_{l>l'}\int_{\tau_{l'-1}=0}^s\int_{\tau_l'=s\lor (\tau_{l'-1}+\eta_2)}^{t}\int_{\tau_{l-1}=\tau_{l'}}^t\int_{\tau_{l}=t\lor (\tau_{l-1}+\eta_2)}^{\tau}B_{ljk}(t)B_{l'j'k'}(s) \Biggr. \\
\times f_{l}(\tau_{l-1},\tau_l)f_{l'}(\tau_{l'-1}, \tau_{l'})\,d\tau_l\,d\tau_{l-1}\,d\tau_{l'}\,d\tau_{l'-1} \\
\Biggl.+\sum_{l=1}^N \int_{\tau_l=t}^{\tau} \int_{\tau_{l-1}=0}^{s\land(\tau_{l}-\eta_2)}B_{ljk}(t)B_{lj'k'}(s)f_l(\tau_{l-1},\tau_l)\,d\tau_{l-1}\,d\tau_l\Biggr].
\end{multlined}
\end{align*}
If $t<s$,
\begin{align*}
& E[H_{jkj'k'}(t,s)] \\
={} & 
\begin{multlined}[t]
E\Biggl[\sum_{l<l'}\int_{\tau_{l-1}=0}^t\int_{\tau_l=t\lor (\tau_{l-1}+\eta_2)}^{s}\int_{\tau_{l'-1}=\tau_l}^s\int_{\tau_{l'}=s\lor (\tau_{l'-1}+\eta_2)}^{\tau}B_{ljk}(t)B_{l'j'k'}(s) \Biggr. \\
\times f_{l}(\tau_{l-1},\tau_l)f_{l'}(\tau_{l'-1}, \tau_{l'})\,d\tau_{l'}\,d\tau_{l'-1}\,d\tau_l\,d\tau_{l-1} \\
\Biggl.+\sum_{l=1}^N \int_{\tau_{l-1}=0}^t\int_{\tau_l=s\lor (\tau_{l-1}+\eta_2)}^{\tau}B_{ljk}(t)B_{lj'k'}(s)f_l(\tau_{l-1},\tau_l)\,d\tau_l\,d\tau_{l-1}\Biggr],
\end{multlined}
\end{align*}
where $x\land y = \min(x,y)$ and $x\lor y = \max(x,y)$, and where $\eta_2$, the minimum gap between any two successive examination times, and $f_l(\cdot, \cdot)$, the conditional density function of $(\tau_{l-1},\tau_l)$ given $(N, \mbf{X}, \mbf{Z})$, are both as defined in Condition 4.
We split the integral on the left side of \eqref{int_eq} into two parts accordingly. Then differentiation of \eqref{int_eq} twice with respect to $t$ yields the following integral equation system:
\begin{equation} \label{second_deriv}
\sum_{(j',k')\in\mathcal{D}}\left\{r_{1jkj'k'}(t)\mbf{h}_{j'k'}^*(t)+\int_0^t r_{2jkj'k'}(t,s)\mbf{h}_{j'k'}^*(s)ds+\int_t^{\tau} r_{3jkj'k'}(t,s)\mbf{h}_{j'k'}^*(s)ds\right\} = \mbf{\psi}_{jk}^{\prime\prime}(t),
\end{equation}
where 
\begin{align*}
r_{1jkj'k'}(t) ={} & \left[\frac{\partial}{\partial t}E[H_{jkj'k'}(t,s)]\Big\vert_{s=t-} 
-\frac{\partial}{\partial t}E[H_{jkj'k'}(t,s)]\Big\vert_{s=t+}\right]\Lambda^{\prime}_{0j'k'}(t) \\
r_{2jkj'k'}(t,s) ={} & \frac{\partial^2}{\partial t^2}E[H_{jkj'k'}(t,s)]\Big\vert_{t\ge s}\Lambda^{\prime}_{0j'k'}(s) \\
r_{3jkj'k'}(t,s) ={} & \frac{\partial^2}{\partial t^2}E[H_{jkj'k'}(t,s)]\Big\vert_{t< s}\Lambda^{\prime}_{0j'k'}(s). 
\end{align*}
By simple calculations,
\[
r_{1jkj'k'}(t) = -\left\{E\left[\sum_{l=1}^N B_{ljk}(t)B_{lj'k'}(t)\mid \tau_{l}=t\right]+E\left[\sum_{l=1}^N B_{ljk}(t)B_{lj'k'}(t)\mid \tau_{l-1}=t\right]\right\}\Lambda^{\prime}_{0j'k'}(t).
\]
For $l=1,\dots, N$, let $\boldsymbol{B}_{l}(t)=\left\{B_{l j k}(t)\right\}_{(j, k) \in \mathcal{D}}$. We also define $\mbf{G}(t)$ to be a $|\mathcal{D}|\times|\mathcal{D}|$ diagonal matrix with elements $\{\Lambda^{\prime}_{0jk}(t)\}_{(j,k)\in\mathcal{D}}$. 
Then the $|\mathcal{D}|\times|\mathcal{D}|$ matrix with the $\{(j,k),(j',k')\}$th element being $r_{1jkj'k'}$ can be expressed as 
\[
-\left\{E\left[\sum_{l=1}^{N} \boldsymbol{B}_{l}(t)^{\otimes 2} \mid \tau_{l-1}=t\right]+E\left[\sum_{l=1}^{N} \boldsymbol{B}_{l}(t)^{\otimes 2} \mid \tau_{l}=t\right]\right\}\times \mbf{G}(t),
\] 
which is negative definite. 
Thus, the linear operator of $\mbf{h}^*$ on the left side of \eqref{second_deriv} is the sum of an invertible operator and two compact operators. 
By Theorem 4.25 of \citet{rudin1973functional}, showing the invertibility of this operator is equivalent to showing that it is one to one. Since it is the second derivative of $\mbf{\Gamma}$, it suffices to show that $\mbf{\Gamma}$ is one to one.  
If $\mbf{\Gamma}(\mbf{h}) = 0$, then $\langle \mbf{\Gamma}(\mbf{h}), \mbf{h} \rangle = \mathbb{P}\{\ell_{\mbf{\Omega}}({\mbf{h}})\ell_{\mbf{\Omega}}({\mbf{h}})\} = 0$. Thus, with probability one,
\begin{align*}
0={} & \ell_{\mbf{\Omega}}({\mbf{h}}) \\
={} & 
\lim_{\epsilon\rightarrow0}\frac{1}{\epsilon}\int_{\mbf{b}}\Biggl\{\prod_{l=1}^N\mbf{P}(\tau_{l-1}, \tau_l;\mbf{b},\mbf{\beta}_0, \mbf{\Omega}_{0\epsilon, \mbf{h}})^{(S_{l-1},S_l)}-\prod_{l=1}^N\mbf{P}(\tau_{l-1}, \tau_l;\mbf{b},\mbf{\beta}_0, \mbf{\Omega}_0)^{(S_{l-1},S_l)}\Biggr\} 
\phi(\mbf{b};\mbf{\Sigma}_0)d\mbf{b},
\end{align*}
where $\mbf{\Sigma}_0 = \mbf{\Sigma}(\mbf{\gamma}_0)$. 
We evaluate the above equation at all possible $(S_1,S_2,\dots,S_{N-1})$ given the start and end states $(S_0, S_N)$. By taking the sum of the resulting equations, we obtain 
\[
\lim_{\epsilon\rightarrow0}\frac{1}{\epsilon}\int_{\mbf{b}}\left\{\mbf{P}(0, \tau_N;\mbf{b},\mbf{\beta}_0, \mbf{\Omega}_{0\epsilon, \mbf{h}})^{(S_{0},S_N)}-\mbf{P}(0, \tau_N;\mbf{b},\mbf{\beta}_0, \mbf{\Omega}_0)^{(S_{0},S_N)}\right\}\phi(\mbf{b};\mbf{\Sigma}_0)d\mbf{b} =0.
\]
The above equation holds for any $\tau_N\in[0,\tau]$ and any feasible $(S_0,S_N)$, which covers the whole set $\mathcal{D}$ under Condition 3. Thus, for any $t\in[0,\tau]$,
\begin{align*}
\mbf{0}={} & \lim_{\epsilon\rightarrow0}\frac{1}{\epsilon}\int_{\mbf{b}}\left[\mbf{P}(0, t;\mbf{b},\mbf{\beta}_0, \mbf{\Omega}_{0\epsilon, \mbf{h}})-\mbf{P}(0, t;\mbf{b},\mbf{\beta}_0, \mbf{\Omega}_0)\right]\phi(\mbf{b};\mbf{\Sigma}_0)d\mbf{b} \\
={} & \int_{\mbf{b}}\int_0^t\mbf{P}(0,s;\mbf{b},\mbf{\beta}_0,\mbf{\Omega}_0)d\mbf{A}(s;\mbf{b},\mbf{\beta}_0,\int \mbf{h}d\mbf{\Omega}_0)\mbf{P}(s,t;\mbf{b},\mbf{\beta}_0,\mbf{\Omega}_0)\phi(\mbf{b};\mbf{\Sigma}_0)d\mbf{b}.
\end{align*}
By Condition 6, for $(j,k)\in\mathcal{D}$,
\[
\int_0^t\exp\left\{\mbf{\beta}_{0jk}^{\trans}\mbf{X}(s)+\mbf{b}^{\trans}\mbf{Z}(s)\right\}h_{jk}(s)d\Lambda_{0jk}(s)=0.
\]
Differentiating the two sides with respect to $t$ yields $h_{jk}\equiv0$, which implies $\mbf{h}\equiv\mbf{0}$. 
Hence, the linear operator of \eqref{second_deriv} is invertible and thus the solution $\mbf{h}^*$ exists.

Next we verify (\rom{2}). 
Under Conditions 1, 2 and 4, $r_{ijkj'k'}$ ($i=1,2,3$, $(j,k), (j',k')\in\mathcal{D}$) and $\mbf{\psi}_{jk}^{\prime\prime}$ ($(j,k)\in\mathcal{D}$) in \eqref{second_deriv} are all continuously differentiable functions. 
Therefore, for $(j,k)\in\mathcal{D}$, ${\mbf{h}}^*_{jk}$ is continuously differentiable on $[0,\tau]$, which implies bounded variation. 
By the arguments in the proof of Lemma 1, 
$\mbf{\ell}_{\mbf{\theta}}(\widehat{\mbf{\theta}},\widehat{\mbf{\Omega}})-\ell_{\mbf{\Omega}}(\widehat{\mbf{\theta}},\widehat{\mbf{\Omega}})({\mbf{h}}^*)$ belongs to a Donsker class and converges in $L_2(\mathbb{P})$ norm to $\mbf{\ell}_{\mbf{\theta}}-\ell_{\mbf{\Omega}}({\mbf{h}}^*)$, as stated in (\rom{2}).


Finally, we verify (\rom{3}). If the matrix $E[\{\mbf{\ell}_{\mbf{\theta}}-\ell_{\mbf{\Omega}}({\mbf{h}}^*)\}^{\otimes2}]$ is singular, then there exist vectors $\mbf{v}=(\mbf{v}_1, \mbf{v}_2)$ with $\mbf{v}_1 = \{\mbf{v}_{1jk}\}_{(j,k)\in\mathcal{D}}\in\mathbb{R}^{|\mathcal{D}|\times d_1}$ and $\mbf{v}_2\in\mathbb{R}^{d_3}$, such that ${\mbf{v}}^{\trans}E[\{\mbf{\ell}_{\mbf{\theta}}-\ell_{\mbf{\Omega}}({\mbf{h}}^*)\}^{\otimes2}]\mbf{v}=0$. It follows that, with probability one, the score function along the submodel $\{\mbf{\theta}_0+\epsilon \mbf{v}, d\mbf{\Omega}_{0\epsilon}({\mbf{v}}^{\trans}{\mbf{h}}^*)\}$ is zero, where $d\mbf{\Omega}_{0\epsilon}({\mbf{v}}^{\trans}{\mbf{h}}^*) = \{(1-\epsilon {\mbf{v}}^{\trans}{\mbf{h}}^*_{jk})d\Lambda_{0jk}\}_{(j,k)\in\mathcal{D}}$. Therefore, 
\[
\begin{split}
\lim_{\epsilon\rightarrow0}\frac{1}{\epsilon}\int_{\mbf{b}}\left\{\prod_{l=1}^N\mbf{P}(\tau_{l-1}, \tau_l;\mbf{b},\mbf{\beta}_0+\epsilon \mbf{v}_1, \mbf{\Omega}_{0\epsilon}({\mbf{v}}^{\trans}{\mbf{h}}^*))^{(S_{l-1},S_l)}-\prod_{l=1}^N\mbf{P}(\tau_{l-1}, \tau_l;\mbf{b},\mbf{\beta}_0, \mbf{\Omega}_0)^{(S_{l-1},S_l)}\right\} \\
\times\phi(\mbf{b};\mbf{\Sigma}_0)d\mbf{b}+\int_{\mbf{b}} \left\{\prod_{l=1}^N\mbf{P}(\tau_{l-1}, \tau_l;\mbf{b},\mbf{\beta}_0, \mbf{\Omega}_0)^{(S_{l-1},S_l)}\right\}{\mbf{v}}_2^{\trans}\phi^\prime_{\mbf{\gamma}}(\mbf{b};\mbf{\Sigma}_0)d\mbf{b}=0
\end{split}
\]
with probability one, where $\mbf{\Sigma}_0 = \mbf{\Sigma}(\mbf{\gamma}_0)$. We evaluate the above equation at all possible $(S_1,S_2,\dots,S_{N-1})$ given the start and end states $(S_0, S_N)$. Then taking the sum of the resulting equations yields 
\[
\begin{split}
\lim_{\epsilon\rightarrow0}\frac{1}{\epsilon}\int_{\mbf{b}}\left\{\mbf{P}(0, \tau_N;\mbf{b},\mbf{\beta}_0+\epsilon \mbf{v}_1, \mbf{\Omega}_{0\epsilon}({\mbf{v}}^{\trans}{\mbf{h}}^*))^{(S_{0},S_N)}-\mbf{P}(0, \tau_N;\mbf{b},\mbf{\beta}_0, \mbf{\Omega}_0)^{(S_{0},S_N)}\right\}\phi(\mbf{b};\mbf{\Sigma}_0)d\mbf{b} \\
+\int_{\mbf{b}} \left\{\mbf{P}(0, \tau_N;\mbf{b},\mbf{\beta}_0, \mbf{\Omega}_0)^{(S_{0},S_N)}\right\}{\mbf{v}}_2^{\trans}\phi^\prime_{\mbf{\gamma}}(\mbf{b};\mbf{\Sigma}_0)d\mbf{b}=0.
\end{split}
\]
The above equation holds for any arbitrary $\tau_N$ and feasible $(S_0,S_N)$. Thus, for any $t\in[0,\tau]$,
\begin{align*}
\mbf{0}={} & 
\begin{multlined}[t]
\lim_{\epsilon\rightarrow0}\frac{1}{\epsilon}\int_{\mbf{b}}\left[\mbf{P}(0, t;\mbf{b},\mbf{\beta}_0+\epsilon \mbf{v}_1, \mbf{\Omega}_{0\epsilon}({\mbf{v}}^{\trans}{\mbf{h}}^*))-\mbf{P}(0, t;\mbf{b},\mbf{\beta}_0, \mbf{\Omega}_0)\right]\phi(\mbf{b};\mbf{\Sigma}_0)d\mbf{b} \\
+\int_{\mbf{b}} \mbf{P}(0, t;\mbf{b},\mbf{\beta}_0, \mbf{\Omega}_0){\mbf{v}}_2^{\trans}\phi^\prime_{\mbf{\gamma}}(\mbf{b};\mbf{\Sigma}_0)d\mbf{b}
\end{multlined} \\
={} & 
\begin{multlined}[t]
\int_{\mbf{b}}\biggl\{\int_0^t\mbf{P}(0,s;\mbf{b},\mbf{\beta}_0,\mbf{\Omega}_0)d\widetilde{\mbf{A}}(s;\mbf{b},\mbf{\beta}_0,\mbf{\Omega}_0)\mbf{P}(s,t;\mbf{b},\mbf{\beta}_0,\mbf{\Omega}_0)\biggr. \\
\biggl.+\mbf{P}(0,t;\mbf{b},\mbf{\beta}_0,\mbf{\Omega}_0)\frac{{\mbf{v}}_2^{\trans}\phi^\prime_{\mbf{\gamma}}(\mbf{b};\mbf{\Sigma}_0)}{\phi(\mbf{b};\mbf{\Sigma}_0)}\biggr\}\phi(\mbf{b};\mbf{\Sigma}_0)d\mbf{b},
\end{multlined}
\end{align*}
where the off-diagonal elements of the matrix $d\widetilde{\mbf{A}}(s;\mbf{b},\mbf{\beta}_0,\mbf{\Omega}_0)$ are 
\[
d\widetilde{\mbf{A}}(s;\mbf{b},\mbf{\beta}_0,\mbf{\Omega}_0)^{(j,k)} = \left[\mbf{v}_{1jk}^{\trans}\mbf{X}(s)-\mbf{v}^{\trans}\mbf{h}^*_{jk}(s)\right]\exp\left\{\mbf{\beta}_{0jk}^{\trans}\mbf{X}(s)+\mbf{b}^{\trans}\mbf{Z}(s)\right\}d\Lambda_{0jk}(s)
\]
if $(j,k)\in\mathcal{D}$ and 0 otherwise, and the diagonal elements are
\[
d\widetilde{\mbf{A}}(s;\mbf{b},\mbf{\beta}_0,\mbf{\Omega}_0)^{(j,j)} = -\sum_{k\ne j} d\widetilde{\mbf{A}}(s;\mbf{b},\mbf{\beta}_0,\mbf{\Omega}_0)^{(j,k)}.
\]
By Condition 6, $\mbf{v}_2=\mbf{0}$ and 
\[
\int_0^t\left[\mbf{v}_{1jk}^{\trans}\mbf{X}(s)-\mbf{v}^{\trans}\mbf{h}^*_{jk}(s)\right]\exp\left\{\mbf{\beta}_{0jk}^{\trans}\mbf{X}(s)+\mbf{b}^{\trans}\mbf{Z}(s)\right\}d\Lambda_{0jk}(s)=0
\]
for $(j,k)\in\mathcal{D}$.
Differentiating the two sides with respect to $t$ yields 
$\mbf{v}_{1jk}^{\trans}\mbf{X}(t)-\mbf{v}^{\trans}\mbf{h}^*_{jk}(t)=0$. By Condition 2, $\mbf{v}_{1jk}=\mbf{0}$. Hence, the matrix $E[\{\mbf{\ell}_{\mbf{\theta}}-\ell_{\mbf{\Omega}}(\mbf{h}^*)\}^{\otimes2}]$ is invertible.
\end{proof}

\section{Additional Simulation Studies}

We conducted a series of simulation studies with more complex disease processes.
We considered a four-state model with possible transitions including 1 to 2, 2 to 3, 2 to 4, and 3 to 4. 
For each subject, we generated two time-independent covariates, $X_1\sim\text{Ber}(0.5)$ and $X_2\sim \text{Unif}(0,1)$, and random effect $b\sim N(0,\sigma^2)$ with $\sigma^2=0.8$.
We set $(\beta_{121},\beta_{122}) = (0.5,-0.5)$, $\Lambda_{12}(t) = \log(1+0.5t)$,
$(\beta_{231},\beta_{232}) = (0.4, 0.2)$, $\Lambda_{23}(t) = 0.5t$,
$(\beta_{241},\beta_{242}) = (0.3,0.5)$, $\Lambda_{24}(t) = 0.4t$,
$(\beta_{341},\beta_{342}) = (-0.3, 0.7)$, and $\Lambda_{34}(t) = 0.6t$. 
The initial state of each subject belonged to 1, 2, or 3, with probabilities 0.25, 0.5, or 0.25, respectively. 
We generated six potential examination times for each subject, with the first being $\text{Unif}(0,1)$, and the gap between any two successive examination times being $0.05+\text{Unif}(0,1)$. We set the study end time to be 3, beyond which no examinations occurred. 
As is shown in Table~\ref{res_sim2} and Figure~\ref{fig_sim2}, the proposed methods continue to perform well.

\bibliography{Bibliography-MM-MC-Supp}

\pagebreak

\begin{table}
\caption{Estimation of the regression parameters in the simulation studies with four states. \label{res_sim2}} 
\begin{center}
\footnotesize
\begin{tabular}{lrrrr@{\hspace{1em}}rrrr@{\hspace{1em}}rrrr} 
\toprule
& \multicolumn{4}{c}{$n=400$} & \multicolumn{4}{c}{$n=800$} & \multicolumn{4}{c}{$n=1600$} \\
\cmidrule(lr){2-5} \cmidrule(lr){6-9} \cmidrule(lr){10-13}
& Bias & SE & SEE & CP & Bias & SE & SEE & CP & Bias & SE & SEE & CP \\ 
\midrule
$\beta_{121}=0.5$ & 0.028 & 0.367 & 0.352 & 94.6 & 0.011 & 0.247 & 0.241 & 94.7 & 0.003 & 0.171 & 0.167 & 94.6 \\ 
  $\beta_{122}=-0.5$ & $-0.049$ & 0.642 & 0.612 & 94.3 & $-0.018$ & 0.432 & 0.417 & 94.7 & $-0.010$ & 0.296 & 0.287 & 94.6 \\ 
  $\beta_{231}=0.4$ & 0.031 & 0.308 & 0.273 & 92.1 & 0.022 & 0.203 & 0.190 & 93.1 & 0.009 & 0.138 & 0.133 & 94.0 \\ 
  $\beta_{232}=0.2$ & 0.016 & 0.534 & 0.474 & 92.1 & 0.002 & 0.354 & 0.329 & 93.3 & $-0.008$ & 0.242 & 0.229 & 93.7 \\ 
  $\beta_{241}=0.3$ & 0.017 & 0.542 & 0.289 & 91.0 & 0.001 & 0.229 & 0.201 & 92.8 & $-0.002$ & 0.150 & 0.142 & 93.9 \\ 
  $\beta_{242}=0.5$ & 0.041 & 0.646 & 0.499 & 90.8 & 0.015 & 0.390 & 0.346 & 92.9 & 0.008 & 0.256 & 0.242 & 94.1 \\ 
  $\beta_{341}=-0.3$ & $-0.015$ & 0.283 & 0.246 & 91.7 & $-0.005$ & 0.187 & 0.172 & 93.5 & $-0.001$ & 0.126 & 0.121 & 94.1 \\ 
  $\beta_{342}=0.7$ & 0.045 & 0.498 & 0.431 & 91.5 & 0.020 & 0.325 & 0.300 & 93.0 & 0.002 & 0.220 & 0.209 & 93.9 \\ 
  $\sigma^2=0.8$ & 0.123 & 0.417 & 0.358 & 90.5 & 0.043 & 0.239 & 0.233 & 94.2 & $-0.004$ & 0.155 & 0.156 & 95.7 \\ 
  \bottomrule
\end{tabular}
\end{center}
Note: Bias and SE denote the median bias and empirical standard error, respectively.
SEE denotes the median of the standard error estimator, and CP denotes the 
empirical coverage percentage of the 95\% confidence interval. 
The log transformation is used to construct the confidence interval for $\sigma^2$.
Each entry is based on 10,000 replicates.
\end{table}

\begin{figure}
\begin{center}
\includegraphics[width=\textwidth]{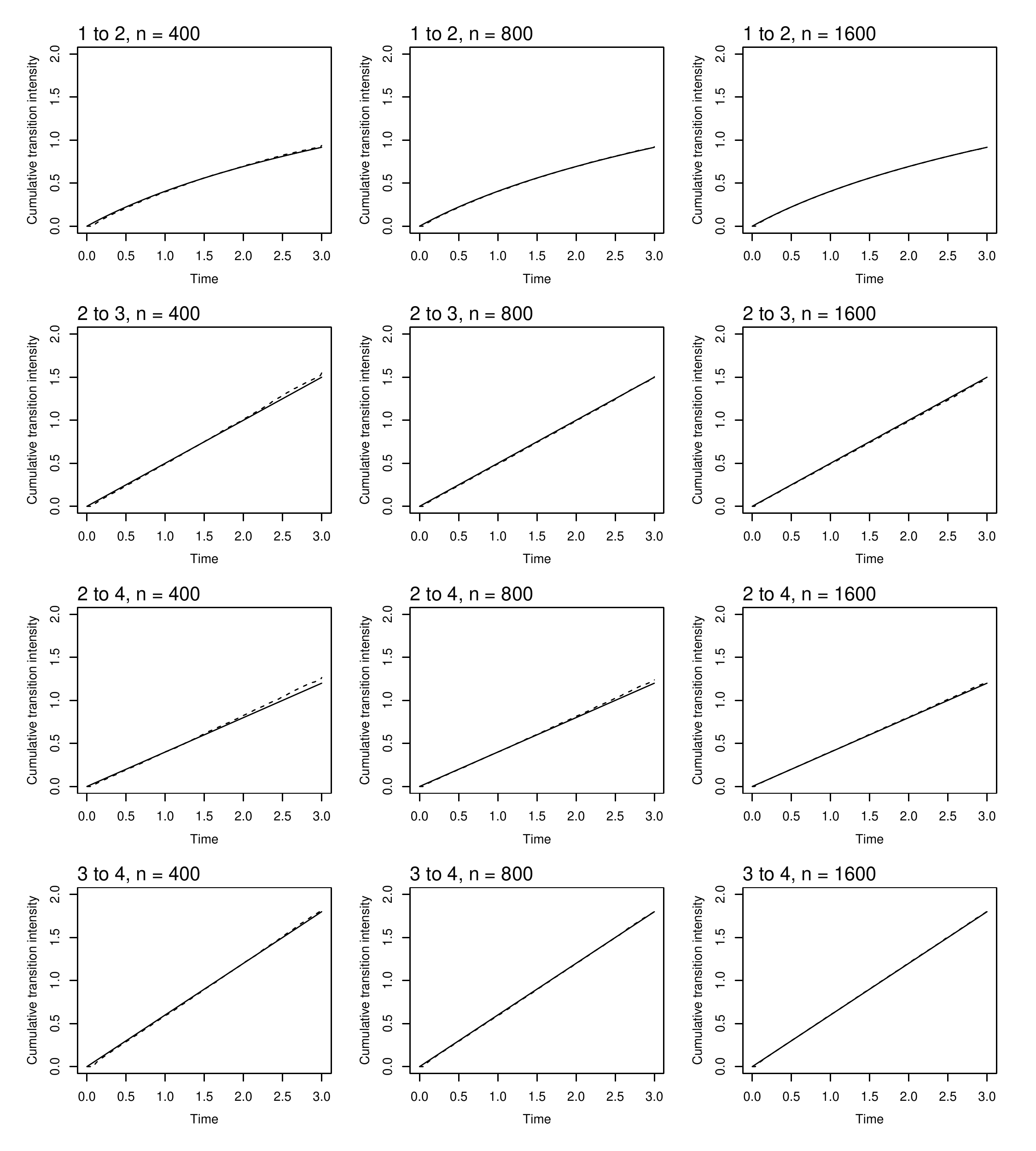}
\end{center}
\caption{Estimation of the cumulative transition intensities in the simulation studies with four states.
The solid and dashed curves show the true values and median estimates based on 10,000 replicates, respectively. \label{fig_sim2}} 
\end{figure}